\documentclass[11pt,a4paper]{article}
\usepackage{fullpage}
\usepackage{amsmath}
\usepackage{amssymb}
\usepackage{stmaryrd}
\usepackage{graphicx}
\usepackage{psfrag}
\usepackage{color}
\usepackage{subfigure}
\usepackage{relsize}
\usepackage{wasysym}
\usepackage[table]{xcolor}
\usepackage{soul}
\usepackage{hyperref}

 \newtheorem{definition}{Definition}
 
 \newtheorem{theorem}{Theorem}
 
 \newtheorem{lemma}{Lemma}
 \newenvironment{proof}{\noindent \begin{rm}{\textbf{Proof.} }}{\hspace*{\fill}$\Box$\par\end{rm}}

\newcommand{\fboxS}[1]{\fbox{\small{#1}}}

\parskip 1ex


\newcommand{\id}{\mbox{\sf Id}}

\newcommand{\N}{\mbox{\sf N}}
\newcommand{\V}{\mbox{\sf V}}

\newcommand{\Bit}{\mbox{\rm Bit}}
\newcommand{\B}{\mbox{\sf dB}}

\newcommand{\leader}{\mbox{\sf leader}}
\newcommand{\dis}{\mbox{\sf d}}
\newcommand{\SB}{\mbox{$\widehat{\sf B}$}}
\newcommand{\dB}{\mbox{\sf dB}}

\newcommand{\ME}{\mbox{\sf Elec}}
\newcommand{\Phase}{\mbox{\sf Phase}}
\newcommand{\Bp}{\mbox{\sf Bit-Position}}
\newcommand{\Bc}{\mbox{\sf Control}}
\newcommand{\Bs}{\mbox{\sf Bit-Strong}}
\newcommand{\MV}{\mbox{\sf HC}}
\newcommand{\MC}{\mbox{\sf PL}}
\newcommand{\Add}{\mbox{\sf Add}}
\newcommand{\p}{\mbox{\sf p}}


\newcommand{\Child}{\mbox{\tt Ch}}
\newcommand{\Root}{\mbox{\tt Root}}
\newcommand{\Nd}{\mbox{\tt Nd}}
\newcommand{\inNd}{\mbox{\tt PassNd}}
\newcommand{\E}{\mbox{\tt Er}}
\newcommand{\Best}{\mbox{\tt Best}}

\newcommand{\NdReset}{\mbox{\tt maxElec}}

\newcommand{\maxSB}{\mbox{\tt Max\SB}}

\newcommand{\R}{\mathbb{R}}
\newcommand{\VChild}{\mbox{\tt VCh}}
\newcommand{\Up}{\mbox{\tt Up}}
\newcommand{\Pass}{\mbox{\tt Pass}}

\newcommand{\Error}{\tt T.Er}
\newcommand{\TAdd}{\tt T.Add}
\newcommand{\TUp}{\tt T.Update}
\newcommand{\TVerif}{\tt T.Verif}
\newcommand{\TBroad}{\tt T.Broad}
\newcommand{\TInc}{\tt T.Inc}
\newcommand{\TPass}{\tt T.Pass}
\newcommand{\TStart}{\tt T.Start}
\newcommand{\TStartdB}{\tt T.StartdB}
\newcommand{\TCleanM}{\tt T.CleanM}

\newcommand{\TReset}{\tt T.Reset}

\newcommand{\Neig}{\mbox{\tt Ng}}

\renewcommand{\Reset}{\mbox{$\mathcal Reset$}}
\newcommand{\Start}{\mbox{$\mathcal Start$}}
\newcommand{\Passive}{\mbox{$\mathcal Passive$}}
\newcommand{\Inc}{\mbox{$\mathcal Inc$}}
\newcommand{\StartdB}{\mbox{$\mathcal StartdB$}}
\newcommand{\BinaryAd}{\mbox{$\mathcal BinAdd$}}
\newcommand{\Broadcast}{\mbox{$\mathcal Broad$}}
\newcommand{\Verification}{\mbox{$\mathcal Verif$}}
\newcommand{\CleanM}{\mbox{$\mathcal CleanM$}}
\newcommand{\Update}{\mbox{$\mathcal Update$}}

\newcommand{\Algo}{\mbox{\bf{CLE}}}

\newcommand{\Rho}{\mbox{\text{P}}}

\newenvironment{smallitemize} {
 \begin{list}{$-$} {\setlength{\parsep}{0pt}
\setlength{\itemsep}{0pt}} } { \end{list} }


\newcommand{\remove}[1]{}

\definecolor{LLgray}{gray}{0.90}

\begin{document}
\title{Compact Deterministic Self-Stabilizing Leader Election:\\
The Exponential Advantage of Being Talkative\thanks{A preliminary version of this paper has appeared in  \cite{BlinT_PODC13,BlinT_DISC13}.}}

\author{
L{\'e}lia Blin\thanks{Additional support from the ANR project IRIS.} \\
\footnotesize{Universit\'{e} d'Evry-Val d'Essonne, 91000 Evry, France.}\\
\footnotesize{UPMC Sorbonne Universit\'{e}s, France.}\\
\footnotesize{LIP6-CNRS UMR 7606, France.}\\
\footnotesize{lelia.blin@lip6.fr}
\and S\'ebastien Tixeuil\\
\footnotesize{UPMC Sorbonne Universit\'{e}s, France.}\\
\footnotesize{Institut Universitaire de France.}\\
\footnotesize{LIP6-CNRS UMR 7606.}\\
\footnotesize{sebastien.tixeuil@lip6.fr}
}
\date{}
\maketitle
\begin{abstract}
This paper focuses on \emph{compact} deterministic self-stabilizing solutions for the leader election problem. 
When the protocol is required to be \emph{silent} (i.e., when communication content remains fixed from some point in time during any execution), there exists a lower bound of $\Omega(\log n)$ bits of memory per node participating to the leader election (where $n$ denotes the number of nodes in the system). This lower bound holds even in rings. We present a new deterministic (non-silent) self-stabilizing protocol for $n$-node rings that uses only $O(\log\log n)$ memory bits per node, and stabilizes in $O(n\log^2 n)$  rounds. Our protocol has several attractive features that make it suitable for practical purposes. First, the communication model fits with the model used by existing compilers for real networks. Second, the size of the ring (or any upper bound on this size) needs not to be known by any node. Third, the node identifiers can be of various sizes. Finally, no synchrony assumption, besides a weakly fair scheduler, is assumed. Therefore, our result shows that, perhaps surprisingly, trading silence for exponential improvement in term of memory space does not come at a high cost regarding  stabilization time or minimal assumptions.
\end{abstract}

\newpage

\section{Introduction}
\label{sec:intro}

This paper is targeting the issue of designing efficient self-stabilization algorithm for the leader election problem. \emph{Self-stabilization}~\cite{D74j,D00b,T09bc} is a general paradigm to provide forward recovery capabilities to distributed systems and networks. Intuitively, a protocol is self-stabilizing if it is able to recover  from any transient failure, without external intervention. \emph{Leader election} is one of the fundamental building blocks of distributed computing, as it enables to distinguish a single node in the system, and thus to perform specific actions using that node. Leader election is especially important in the context of self-stabilization as many protocols for various problems assume that a single leader exists in the system, even when faults occur. Hence, a self-stabilizing leader election mechanism enables to run such protocols in networks where no leader is a priori given, by using simple composition techniques~\cite{D00b}.

Most of the literature in self-stabilization is dedicated to improving efficiency after failures occur, including minimizing the stabilization time, i.e., the maximum amount of time one has to wait before  recovering from a failure. While stabilization time is meaningful to evaluate the efficiency of an algorithm in the presence of failures, it does not necessarily capture the overhead of self-stabilization when there are no faults~\cite{ANT12c}, or after stabilization. Another important criterion to evaluate this overhead is the \emph{memory space} used by each node. This criterion is motivated by two practical reasons, that we detail below. 

First, self-stabilizing protocols require that \emph{some} communications carry on forever (in order to be able to detect distributed inconsistencies due to transient failures~\cite{BDDT07j,DMT09c}). Therefore, minimizing the memory space used by each node enables to  minimize the amount of information that is exchanged between nodes. Indeed, protocols are typically written in the \emph{state model}, where the state of each node is read by each of its neighbors. (The use of the state model is motivated by the fact that all existing stabilization-preserving compilers~\cite{AK09c,CG12j,DMCDHSHLAG08c,MG05b} are precisely designed for this  model).  

Second, minimizing memory space enables to significantly reduce the cost of redundancy when mixing self-stabilization and replication, in order to increase the probability of masking or containing transient faults~\cite{GCH06c,HP00j}. For instance, duplicating every bit three times at each node permits to withstand one randomly flipped bit.  More generally, decreasing the memory space allows the designer to duplicate this memory many times, in order to tolerate many random bit-flips.

A foundational result regarding memory space in the context of self-sta\-bi\-li\-za\-tion is due to Dolev \emph{et al.}~\cite{DGS99j}, which  states that, $n$-node networks, $\Omega(\log n)$ bits of memory are required for solving global tasks such as leader election. Importantly, this bound holds even for the ring. A key component of this lower bound  is that the protocol is assumed to be \emph{silent}. (Recall that a protocol is silent if each of its executions reaches a point in time beyond which the registers containing the information available at each node do \emph{not} change). The lower bound can be extended to \emph{non-silent} protocols, but only for protocols with restricted capabilities. For instance, it holds in \emph{anonymous} (and uniform) unidirectional rings, even of prime size \cite{BGJ07j,FJ01c}. As a matter of fact, most deterministic self-stabilizing leader election protocols~\cite{AB98j,AG94j,AKMPV07j,DLV11j,DH97j} use at least $\Omega(\log n)$ bits of memory per node. Indeed, either these protocols directly compare node identifiers (and thus communicate node identifiers to neighbors), or they compute some variant of a hop-count distance to the elected node (and this distance can be as large as  $\Omega(n)$ to be accurate). 

A few previous work~\cite{IL94c,ILS95c,MOOY92c} managed to break the $\Omega(\log n)$ bits lower bound for the memory space of self-stabilizing leader election algorithms. Nevertheless, the corresponding algorithms exhibit shortcomings that hinder their relevance to practical applications. For instance, the algorithm by Mayer \emph{et al.}~\cite{MOOY92c}, by Itkis and Levin~\cite{IL94c}, and by Awerbuch and Ostrovstky~\cite{AO94c} use a constant number of bits per node only. However, these algorithms guarantee \emph{probabilistic} self-stabilization only (in the Las Vegas sense). In particular, the stabilization time is only \emph{expected} to be polynomial in the size of the network. Moreover, these algorithms are designed for a communication model that is more powerful than the classical state model used in this paper. (The state model is the model used in most available compilers for actual networks~\cite{AK09c,CG12j,DMCDHSHLAG08c,MG05b}). More specifically, Mayer \emph{et al.}~\cite{MOOY92c} use the message passing model, and Awerbuch and Ostrovsky~\cite{AO94c} use the link-register model, where communications between neighboring nodes are carried out through dedicated registers. Finally, Itkis and Levin~\cite{IL94c} use the state model augmented with reciprocal pointer to neighbors. In this model, not only a node $u$ is able to distinguish a particular neighbor $v$ (which can be done using local labeling), but also this distinguished neighbor $v$ is aware that it has been selected by $u$. Implementing this mutual interaction between neighbors  typically requires distance-two coloring, link coloring, or two-hops communication. All these techniques are impacting the memory space requirement significantly~\cite{MT09j}. It is also important to note that the communication models in~\cite{AO94c,IL94c,MOOY92c} allow nodes to send different information to different neighbors, while this capability is beyond the power of the classical state model. The ability to send different messages to different neighbors is a strong assumption in the context of self-stabilization. It anables to construct a ``path of information'' that is consistent between nodes. This path is typically used to distribute the storage of information along a path, in order to reduce the information stored at each node. However, this assumption prevents the user from taking advantage of the  existing compilers. So implementing the protocols in~\cite{AO94c,IL94c,MOOY92c}  to actual networks requires to rewrite all the codes from scratch. 

To our knowledge, the only \emph{deterministic} self-stabilizing leader election protocol using sub-logarithmic memory space in the classical state model is due to Itkis \emph{et al.}~\cite{ILS95c}. Their elegant algorithm uses only a constant number of bits per node,  and stabilizes in $O(n^2)$ time in $n$-node rings. However, the algorithm relies on several restricting assumptions. First, the algorithm works properly only if the size of the ring is \emph{prime}. Second, it assumes that, at any time, a \emph{single} node is scheduled for execution, that is, it assumes a \emph{central} scheduler~\cite{DT11r}. Such a scheduler is far less practical than the classical \emph{distributed} scheduler, which allows any set of processes to be scheduled concurrently for execution. Third, the algorithm in~\cite{ILS95c} assumes that the ring is \emph{oriented}. That is, every node is supposed to possess  a consistent notion of left and right. This orientation permits to mimic the behavior of reciprocal pointer to neighbors mentioned above. Extending the algorithm by Itkis \emph{et al.}~\cite{ILS95c} to more practical settings, i.e., to non-oriented rings of arbitrary size, to the use of a distributed scheduler, etc,  is not trivial if one wants to preserve a sub-logarithmic memory space at each node. For example, the existing transformers enabling to enhance protocols designed for the central scheduler in order to  operate under the distributed scheduler require $\Theta(\log n)$ memory at each node~\cite{DT11r}. Similarly, self-stabilizing ring-orientation protocols exist, but those preserving sub-logarithmic memory space either works only in rings of odd size for deterministic guarantees~\cite{H98j}, or just provide probabilistic guarantees~\cite{IJ93j}. Moreover, in both cases, the stabilization time is $O(n^2)$, which is quite large.

To summarize, all existing self-stabilizing leader election algorithm designed in a practical communication model, and for rings of arbitrary size, without a priori orientation, use $\Omega(\log n)$ bits of memory per node. Breaking this bound, without introducing any kind of restriction on the settings, requires, beside being non-silent, a completely new approach.  

\subsection*{Our results}

In this paper, we present a deterministic (non-silent) self-stabilizing leader election algorithm that operates under the distributed scheduler in non-anonymous undirected  rings of arbitrary size. Our  algorithm is non-silent to circumvent the lower bound $\Omega(\log n)$ bits of memory per node in~\cite{DGS99j}. It uses only $O(\log \log n)$ bits of memory per node, and stabilizes in $O(n \log^2n)$ time. 

Unlike the algorithms in~\cite{AO94c,IL94c,MOOY92c}, our algorithm is deterministic, and designed to run under the classical state-sharing communication model (a.k.a. state model), which allows it to be implemented by using actual compilers~\cite{AK09c,CG12j,DMCDHSHLAG08c,MG05b}. Unlike~\cite{ILS95c}, the size of the ring is arbitrary, the ring is not assumed to be oriented, and the scheduler is distributed. Moreover the stabilization time of our algorithm is significantly smaller than the one in \cite{ILS95c}. Similarly to~\cite{AO94c,IL94c,KormanKM11,MOOY92c}, our algorithm uses a technique to distribute the information among nearby nodes along a sub-path of the ring. However, our algorithm does not rely on powerful communication models such as the ones used in~\cite{AO94c,IL94c,MOOY92c}. Those powerful communication models make easy the construction and management of such sub-paths. The use of the classical state-sharing model makes the construction and management of the sub-paths much more difficult. 

Besides the use of a sub-logarithmic memory space, and beside a quasi-linear stabilization time, our algorithm possesses several attractive features.  First, the size (or any  upper bound for this size) need not to be known by any node. Second, the node identifiers (or identities) can be of various sizes (to model, e.g., Internet networks running different versions of IP). Third, no synchrony assumption besides weak fairness is assumed (a node that is continuously enabled for execution is eventually scheduled for execution). 

At a high level, our algorithm is essentially based on two techniques. One consists in electing the leader by comparing the identities of the nodes, bitwise, which requires special care, especially when the node identities can be of various sizes. The second technique consists in maintaining and merging trees based on a parenthood relation, and verifying the absence of cycles in the 1-factor induced by this parenthood relation. This verification is performed using small memory space by grouping the nodes in hyper-nodes of appropriate size. Each hyper-node handles an integer encoding a distance to a root. The bits of this distance are  distributed among the nodes of the hyper-nodes to preserve a small memory per node. Difficulties arise when one needs to perform arithmetic operations on these distributed bits, especially in the context where nodes are unaware of the size of the ring. The precise design of our algorithm requires overcoming many other difficulties due to the need of maintaining correct information in an environment subject to arbitrary faults. 

In addition, we took care of designing our algorithm to be ready for implementation, i.e., we do not only describe a conceptual protocol, but also produce a \emph{concrete} self-stabilizing leader election protocol. This article provides a high level description of our algorithm,   a detailed description of the protocol, and a complete proof of correctness.\footnote{Moreover, the reader is invited to consult \href{http://www-npa.lip6.fr/~blin/Election/}{\textcolor{blue}{\sl www-npa.lip6.fr/\~{}blin/Election/}} where a video of the dynamic execution of the protocol is presented. This video is the result of a complete implementation of the protocol. The video execution is using a randomized distributed scheduler. The initial configuration is illegitimate, and the video displays the protocol operation towards a legitimate configuration. } 
To sum up, our result shows that, perhaps surprisingly, trading silence for exponential improvement in term of memory space does not come at a high cost regarding  stabilization time, neither it does regarding minimal assumptions about the communication framework. 


\section{Model and definitions}
\label{sec:model}

\subsection{Program syntax and semantics} 

A distributed system consists of $n$ processors that form a
communication graph. The processors are represented by the nodes of this graph, and the edges represent pairs of processors
that can communicate directly with each other. Such processors are said to be \emph{neighbors}. This classical model is called \emph{state-sharing communication model}.
The \emph{distance} between two processors is the length (i.e., number of edges) of the shortest
path between them in the communication graph. Each processor contains
variables, and rules. A variable ranges over a fixed domain of values.
A rule is of the form $$\langle label \rangle : \langle guard \rangle
\longrightarrow \langle command \rangle.$$  A \emph{guard} is a boolean
predicate over processor variables. A \emph{command} is a set of
variable-assignments. A command of processor $p$ can only update its own variables.  
On the other hand, $p$ can read the variables of its neighbors. 
An assignment of values to all variables in the system is called a
\emph{configuration}. A rule whose guard is \textbf{true} in some
system configuration is said to be \emph{enabled} in this configuration. The rule
is \emph{disabled} otherwise. An atomic execution of a subset of
enabled rules results in a transition of the system from one configuration to
another. This transition is called a \emph{step}. A \emph{run}
of a distributed system is a sequence of  transitions. 

\subsection{Schedulers}

A \emph{scheduler}, also called \emph{daemon}, is a restriction on the
runs to be considered. The schedulers differ among them by different execution semantics,
and by different fairness in the activation of the processors~\cite{DT11r}. 
With respect to execution semantics, we consider  the least restrictive
scheduler, called the \emph{distributed scheduler}.
In the run of a distributed scheduler, a step can contain the execution of an
arbitrary subset of enabled rules of correct processors. 
With respect to fairness, we use the least restrictive scheduler, called \emph{weakly fair scheduler}. In every run of the weakly fair
scheduler, a rule of a correct processor is executed infinitely often if it is enabled in all
but finitely many configurations of the run. That is, the rule has to
be executed only if it is continuously enabled.  A \emph{round} is the smallest portion of an execution where every process has the opportunity to execute at least one action.

\subsection{Predicates and specifications}
 
A predicate is a boolean function over network configurations. A
configuration \emph{conforms} to some predicate $R$, if $R$ evaluates
to \textbf{true} in this configuration. The configuration
\emph{violates} the predicate otherwise.  Predicate $R$ is
\emph{closed} in a certain protocol $P$, if every configuration of
a run of $P$ conforms to $R$, provided that the protocol starts
from a configuration conforming to $R$. Note that if a protocol
configuration conforms to $R$, and the  configuration resulting from 
the execution of any step of $P$ also conforms to $R$, then $R$
is closed in $P$.

A \emph{specification} for a processor $p$ defines a set of
configuration sequences. These sequences are formed by variables of
some subset of processors in the system. This subset always includes
$p$ itself. A \emph{problem specification}, or  \emph{problem} for short,
defines specifications for each processor of the system. A problem
specification in the presence of faults defines specifications for
correct processors only.  Program $P$ \emph{solves} problem
$S$ under a certain scheduler if every run of $P$ satisfies
the specifications defined by $S$.  A closed predicate $I$ is an
\emph{invariant} of program $P$ with respect to problem $S$
if every run of $P$ that starts in a state conforming to $I$
satisfies $S$. 
Given two predicates $l_1$ and $l_2$ for program $P$ with respect to
problem $S$, $l_2$ is an \emph{attractor} for $l_1$ if every run that
starts from a configuration that conforms to $l_1$ contains a
configuration that conforms to $l_2$. Such a relationship is denoted
 by $$l_1 \triangleright l_2.$$
A program $P$ is \emph{self-stabilizing} \cite{D74j} to specification
$S$ if every run of $P$ that starts in an arbitrary
configuration contains a configuration conforming to an invariant of
$P$ with respect to problem $S$. That is, this invariant is an attractor of predicate \emph{true}.

\subsection{Leader election specification}
 
Consider a system of processors where each processor has a boolean
variable $\leader$. We use the classical definition of 
\emph{leader election}, which specifies that, in every protocol run, 
there is a suffix where a single processor $p$ has $\leader_p=true$, and 
every other processor $q\neq p$ satisfies $\leader_q=false$.


\section{A compact leader-election protocol for rings}
\label{sec:protocol}

In this section, we describe our self-stabilizing algorithm for leader election in arbitrary $n$-node rings. The algorithm is later proved to use $O(\log\log n)$ bits of memory per node, and to stabilize in quasi-linear time, whenever the identities of the nodes are between 1 and $n^c$, for some $c\geq 1$. For the sake of simplicity, we assume that the identifiers are in $[1,n]$. Nevertheless, the algorithm works without assuming any particular range for the identifiers. We first provide a general overview of the algorithm, followed by a more detailed description in Section~\ref{subsec:detaildescription}. All predicates and commands are postponed in section~\ref{subsec:Pred}.

\subsection{Overview of the algorithm}

As  many existing deterministic self-stabilizing leader election algorithms, our algorithm aims at electing the node with maximum identity among all nodes, and, simultaneously, at constructing a spanning tree rooted at the elected node. The main constraint imposed by our wish to use sub-logarithmic memory is that we cannot exchange or even locally use complete identifiers, as their size $\Omega(\log n)$ bits does not fit in a sub-logarithmic size memory. As a matter of fact, we assume that every node can access the bits of its identifier, but only a constant number of them can be simultaneously stored and/or communicated to neighbors at any given time. Our algorithm makes sure that every node stores the current position of a particular bit of the identifier, referred to as a \emph{bit-position}  in the sequel. 

\subsubsection{Selection of the leader}

Our algorithm operates in phases. At each phase, each node that is a candidate leader $v$ reveals some bit-position, different from the ones at the previous phases, to its neighbors. More precisely, let $\id_v$ be the identity of node $v$, and assume that $\id_v=\sum_{i=0}^k b_i 2^i$. Let $I=\big \{i\in\{0,...,k\},b_i\neq 0\big\}$ be the set of all non-zero bit-positions in the binary representation of $\id_v$. Let us rewrite $I=\{p_1,...,p_j\}$ with $0\leq p_1<p_2<...<p_j\leq k$. Then, during Phase~$i$, $i=1,\dots,j$, node $v$ reveals $p_{j-i+1}$ to its neighbors, which potentially propagate it to their neighbors, and possibly to the whole network in subsequent phases. During Phase $i$, for $j+1\leq i\leq \lfloor \log n\rfloor +1$, node $v$ either becomes passive (that is, stops acting as a candidate leader) or remains a candidate leader. If, at the beginning of the execution of the algorithm, all nodes are \emph{candidate} leaders, then during each phase, some candidate leaders are eliminated, until exactly one candidate leader remains, which becomes the actual leader. More precisely, let $p_{max}(i)$ be the most significant bit-position revealed at Phase~$i$ among all nodes. Then, among all candidate leaders still competing for becoming leader, only those whose bit-position revealed at Phase~$i$ is equal to $p_{max}(i)$  carry on the electing process. The other ones become passive. 

If all identities are in $[1,n]$, then the communicated bit-positions are less than $\lceil\log n\rceil$, and thus can be represented with $O(\log\log n)$ bits. The difficulty is to implement this simple ``compact'' leader election mechanism in a self-stabilizing manner. In particular, the nodes may not have same number of bits  encoding their identifiers, the ring may not start from a configuration where every node is a candidate leader, and the distributed scheduler may lead nodes to operate at various paces.

An additional problem in self-stabilizing  leader election is the potential presence of \emph{impostor} leaders. If one can store the identity of the leader at each node, then detecting an impostor is easy. Under our memory constraints, nodes cannot store the  identity of the leader, nor read entirely their own identifier. So, detecting impostor leaders becomes non trivial, notably when an impostor has an identity whose most significant bit is equal to the most significant bit of the leader. To overcome this problem, the selection of the leader must run perpetually, leading our algorithm to be non-silent. 

\subsubsection{Spanning tree construction} 

Our approach to make the above scheme self-sta\-bi\-li\-zing is to merge the leader election process with a tree construction process. Every candidate leader is the root of a tree. Whenever a candidate leader becomes passive, its tree is merged to another tree, until there remains only one tree.
The main obstacle in self-stabilizing tree-construction is to handle an arbitrary initial configuration. This is particularly difficult if the initial configuration yields a cycle rather than a spanning forest. In this case, when the leader election subroutine, and the tree construction subroutine are conjointly used, the presence of the cycle implies that, while every node is expecting to point to a neighbor leading to a leader, there are no leaders in the network. Such a configuration is called  \emph{fake} leader. In order to break cycles that can be present in the initial configuration, we use an improved variant of the classical distance calculation~\cite{DIM97j}. In the classical approach, every node $u$ maintains an integer variable $\dis_u$ that  stores the distance from $u$ to the root of its tree. If $v$ denotes the parent of $u$, then typically $\dis_v=\dis_u-1$, and if $\dis_v \geq \dis_u$, then $u$ deletes its pointer to $v$. If the topology of the network is a ring, then detecting the presence of an initial spanning cycle, instead of a spanning forest, may involve distance variables as large as $n$, inducing $\Omega(\log n)$ bits of memory. 

In order to use exponentially less memory, our algorithm uses the distance technique but modulo $\log n$. More specifically, each node $v$ maintains three variables. The first variable is an integer denoted by $\dis_v \in\{0,...,\lfloor \log n \rfloor\}$, called the ``distance'' of node $v$. Only candidate leaders $v$ can have $\dis_v=0$. Each node $v$ maintains $\dis_v=1+(\min\{\dis_{u},\dis_{u'}\} \bmod \lfloor \log n \rfloor)$
where $u$ and $u'$ are the neighbors of $v$ in the ring.  Note that  nodes are not aware of $n$. Thus they do not actually use the value $\lfloor \log n\rfloor$ as above, but a potentially erroneous estimation of it. 

The second variable is $\p_v$, denoting the parent of node $v$. This parent is its neighbor $w$ such that  $\dis_v=1 + (\dis_w \bmod  \lfloor \log n \rfloor)$. By itself, this technique is not sufficient to detect the presence of a cycle, because the number of nodes can be a multiple of $\lfloor \log n\rfloor$. Therefore, we also  introduce the notion of \emph{hyper-node}, defined as follows: 

\begin{definition}
A \em{hyper-node} $X$ is a set  $\{x_1,x_2,\cdots,x_{\lfloor \log n \rfloor}\}$ of consecutive  nodes in the ring, such that \mbox{$\dis_{x_1}=1$}, \mbox{$\dis_{x_2}=2$},..., \mbox{$\dis_{x_{\lfloor \log n \rfloor}}=\lfloor \log n \rfloor$}, $\p_{x_2} =x_1,$ $\p_{x_3}=x_2,...,$ $\p_{x_{\lfloor \log n \rfloor}}=x_{\lfloor \log n \rfloor-1}$ and $\p_{x_1} \neq x_2$. 
\end{definition}

The parent of a hyper-node $X=\{x_1,x_2,\cdots,x_{\lfloor \log n \rfloor}\}$ is a hyper-node $Y=\{y_1,y_2,\cdots,y_{\lfloor \log n \rfloor}\}$ such that $\p_{x_1} =y_{\lfloor \log n \rfloor}$. By definition, there are at most $\lceil n/\lfloor \log n \rfloor \rceil $ hyper-nodes. If $n$ is not divisible by $\lfloor \log n \rfloor$, then some nodes can be elements of an incomplete hyper-node. There can be several incomplete hyper-nodes, but if the parent of a (complete) hyper-node  is an incomplete hyper-node, then an error is detected. Incomplete hyper-nodes must be leaves: there cannot be incomplete hyper-nodes in a cycle. 

The key to our protocol is that hyper-nodes can maintain larger distance information than simple nodes, by distributing the information among the nodes of a hyper-node. More precisely, we assume that each node $v$ maintains a bit of information, stored in variable $\dB_v$. Let $X=\{x_1,x_2,\cdots,x_{\lfloor \log n \rfloor}\}$ be a hyper-node, the set $\B_X=\{\dB_{x_1},\dB_{x_2},\cdots,$ $\dB_{x_{\lfloor \log n \rfloor}}\}$ can be considered as the binary representation of an integer on ${\lfloor \log n \rfloor}$ bits, i.e., between $0$ and $2^{\lfloor \log n \rfloor}-1$. Now, it is possible to use the same distance approach as usual, but at the hyper-node level. Part of our protocol consists in comparing, for two hyper-nodes $X$ and $Y$, the distance $\B_X$ and the distance $\B_Y$. If $Y$ is the parent of $X$, then the difference between $\B_X$ and $\B_Y$ must be one. Otherwise an inconsistency is detected regarding the current spanning forest. The fact that hyper-nodes include ${\lfloor \log n \rfloor}$ nodes implies that dealing with distances between hyper-nodes  is sufficient to detect the presence of a cycle spanning the $n$-node ring. This is because $2^{\lfloor \log n \rfloor} \geq n/\log n$.  (Note that hyper-nodes with $k$ nodes such that $2^k\geq n/k$ would do the same). 

In essence, the part of our algorithm dedicated to checking the absence of a spanning cycle generated by the parenthood relation boils down to comparing distances between hyper-nodes. Note that comparing distances between hyper-node involves communication at distance $\Omega(\log n)$. This is another reason why our algorithm is non-silent. 

\subsection{Detailed description}
\label{subsec:detaildescription}

\subsubsection{Notations and preliminaries}

Let $C_n=(V,E)$ be the $n$-node  ring, where $V$ is the set of nodes, and $E$ the set of edges. 
A node $v$ has access to an unique identifier, but can only access to this identifier one bit at a time, using the $\Bit(x,v)$ function, that returns the position of the $x$th most significant bit equal to $1$ in $\id_v$. This position can be encoded with $O(\log \log n)$ bits when identifiers are encoded using $O(\log n)$ bits, as we assume they are. 
A node $v$ has access to local port number associated to its adjacent edges. The variable parent of node $v$, denoted by $\p_v$, is actually the port number of the edge connecting $v$ to its parent. In case of $n$-node rings, $\p_v\in\{0,1\}$ for every $v$. (We do not assume any consistency between the port numbers). In a legitimate configuration, the structure induced by the parenthood relation must be a tree. The presence of more than one tree, or of a cycle, correspond to illegitimate configurations. We denote by $\N_v$ the set of the neighbors of $v$ in $C_n$, for any node $v \in V$. 

The variable distance, denoted by $\dis_v$ at node $v$, takes values in $\{-1,0,1,...,$ $\lfloor \log n\rfloor\}$. We have  $\dis_v=-1$ if all the variables of $v$ are reset. We have  $\dis_v=0$ if the node $v$ is a root of some tree induced by the parenthood relation. Such a root is also called candidate leader. Finally, $\dis_v\in\{1,...,\lfloor \log n\rfloor\}$ if $v$ is a node of some tree induced by the parenthood relation, different from the root. Such a node is also called passive.  Note that we only assume that variable $\dis$ can hold \emph{at least} (and not \emph{exactly}) $\lfloor \log n\rfloor +1$ different values, since nodes are not aware of how many they are in the ring, and just use an estimation of $n$.
The children of a node $v$ are returned by the macro $\Child(v)$, which returns the port number(s) of the edges leading to the child(ren) of $v$.

To detect cycles, we use four variables. First, each node maintains the variable $\dB$ introduced in the previous section, for constructing a distributed integer stored on an hyper-node. The second variable, $\Add_v\in\{+,\text{ok},\emptyset\}$, is used for performing additions involving values stored distributively on hyper-nodes. The third variable, $\MC_v$ (for pipeline), is used to send the result of an addition to the hyper-node children of the hyper-node containing $v$. Finally, the fourth  variable, $\MV_v$ (for Hyper-node Checking), is dedicated to checking the hyper-node bits. Variables $\MC_v$ and $\MV_v$ are either empty, or each composed of a pair of variables $(x,y)\in \{1,...,\lfloor \log n \rfloor\}\times \{0,1\}$. 

For constructing the tree rooted at the node with highest identity, we use three additional variables. After convergence, we expect the leader to be the unique node with distance zero, and to be the root of an inward directed spanning tree of the ring, where the arc of the tree is defined by the parenthood relation.  To satisfy the leader election specifications, we introduce the variable $\leader_v\in\{0,1\}$ whose value is $1$ if $v$ is the leader and $0$ otherwise. Since we do not assume that the identifiers of every node are encoded on the same number of bits, simply comparing the $i$-th most significant bit of two nodes is irrelevant. So, we use variable $\SB$, which represents the most significant bit-position of all the identities present in the ring. This variable is also locally used at each node $v$ as an estimate of $\lfloor \log n \rfloor$. Only the nodes $v$ whose variable $\SB_v$ is equal to the most significant bit of the $\id_v$ carry on participating to the election. Finally, the variables $\Bs$, $\Phase$, $\Bp$ and $\Bc$ are the core of the election process. Let $r$ be the root of the tree including node $v$. Then, the variable $\Bs_v$  stores the position of the most significant bit of $\id_r$, variable $\Phase_v$ stores the current phase number $i$, variable $\Bp$ stores the bit-position of $\id_r$ at phase $i$, and variable $\Bc_v$ 
stores a boolean  dedicated to the control of the updating of the elections variables.

\subsubsection{The Compact Leader Election algorithm $\Algo$}

\begin{figure}[tb]
\fboxS{
\footnotesize
$
\begin{array}{lcllll}
\R_{\tt Error}\hspace*{-2ex}&:&\hspace*{-2ex}  \hspace*{1,5ex}\Error(v) \vee \hspace*{1,5ex}\TReset(v) & \rightarrow& \Reset(v);\\\R_{\tt Start}\hspace*{-2ex}& :&\hspace*{-2ex}\neg  \Error(v) \wedge \neg\TReset(v)\wedge (\dis_v=\text{-}1)  \wedge \TStart(v)& \rightarrow & \Start(v);\\
\R_{\tt Passive}\hspace*{-2ex}&: &\hspace*{-2ex}\neg  \Error(v) \wedge \neg\TReset(v)\wedge (\dis_v>\text{-}1) \wedge \TPass(v)& \rightarrow & \Passive(v);\\
\R_{\tt Root}\hspace*{-2ex}&: &\hspace*{-2ex}\neg  \Error(v) \wedge \neg\TReset(v)\wedge (\dis_v=0)\hspace*{1ex} \wedge  \TPass(v)  \wedge \TStartdB(v) & \rightarrow & \StartdB(v);\\
\hspace*{-2ex}& &\hspace*{-2ex}\neg  \Error(v) \wedge \neg\TReset(v)\wedge (\dis_v=0)\hspace*{1ex}\wedge  \neg \TPass(v)  \wedge \TInc(v) & \rightarrow & \Inc(v);\\
\R_{\tt Update}\hspace*{-2ex}&: &\hspace*{-2ex}\neg  \Error(v) \wedge \neg\TReset(v)\wedge (\dis_v>0)\hspace*{1ex}\wedge  \neg \TPass(v)   \wedge  \neg{\tt EqElecP}(v) \wedge  \TUp(v) & \rightarrow & \Update(v);\\
\R_{\tt HyperNd}\hspace*{-2ex}&: &\hspace*{-2ex}\neg  \Error(v) \wedge \neg\TReset(v)\wedge (\dis_v>0)\hspace*{1ex}\wedge  \neg \TPass(v)  \wedge {\tt EqElecP}(v) \wedge& &\\
\hspace*{-2ex}& &\hspace*{-2ex}\hspace*{5ex}  (\Add_v=\emptyset) \wedge \TAdd(v) & \rightarrow & \BinaryAd(v);\\
\hspace*{-2ex}& &\hspace*{-2ex}\hspace*{5ex}  (\Add_v\neq\emptyset) \wedge  \TBroad(v) & \rightarrow & \Broadcast(v);\\
\hspace*{-2ex}& &\hspace*{-2ex}\hspace*{5ex}  \TVerif(v) & \rightarrow & \Verification(v);\\
\hspace*{-2ex}& &\hspace*{-2ex}\hspace*{5ex}  \TCleanM(v) & \rightarrow & \CleanM(v);\\
\end{array}
$
}

\caption{\sl Formal description of algorithm $\Algo$.  }
\label{fig:Algo}
\end{figure}

Algorithm $\Algo$ is presented in Figure~\ref{fig:Algo}. In this figure, a rule of the form 
\[
label : guard_0 \wedge (guard_1 \vee guard_2) \longrightarrow (command_1 ; command_2)
\]
where $command_i$ is performed when $guard_0\wedge guard_i$ is true. Such a rule is presented in several lines, one for the common guard, $guard_0$, and one for each alternative guards, $guard_i$, with their respective command. Figure~\ref{fig:Algo} describes the rules of the algorithm. 

\medskip

\noindent  $\Algo$ is composed of six rules:

\begin{smallitemize}

\item The rule $\R_{\tt Error}$, detects at node $v$ the presence of inconsistencies between the content of its variables and the content of its neighboring variables. If $v$ has not reset its variables, or has not restarted, the command $\Reset(v)$ is activated, i.e., all the content of all the variables at node $v$ are reset, and the variable  $\dis_v$ is set to $-1$. 

\item The rule $\R_{\tt Start}$, makes sure that, if an inconsistency is detected at some node, then all the nodes of the network reset their variables, and restart.  Before restarting, every node $v$ waits until all its neighbors are reset or have restarted. A node $v$ that restarts sets $\dis_v=0$, and its election variables $\Bs_v$, $\Bp_v$ appropriately, with $\Phase_v=1$. 

\item The rule $\R_{\tt Passive}$, is dedicated to the election process. A node $v$ uses command $\Passive(v)$ when one of its neighbors has a bit-position larger than its bit-position, at the same phase. 

\item The  rule $\R_{\tt Root}$, concerns the candidate leaders,  i.e.,  every node $v$ with $\dis_v=0$. Such a node can only execute the rule $\R_{\tt Root}$, resulting in that node performing one of the following two commands. Command $\StartdB(v)$ results in $v$ distributing the bit $\dB$ to its neighboring hyper-nodes. Command $\Inc(v)$ results in node $v$ increasing its phase by~1. This happens  when all the nodes between $v$ and others candidate leaders in the current tree are in the same phase, with the same election values $\Bs_v,\Bp_v,\Bc_v$. 

\item The  rule $\R_{\tt Update}$, is dedicated to updating the election variables. 

\item The rule $\R_{\tt HyperNd}$, is dedicated to the hyper-nodes distance verification. 

\end{smallitemize}

\subsubsection{Hyper-nodes distance verification}

Let us consider two hyper-nodes $X$ and $Y$ with $X$ the parent of $Y$. Our technique for verifying the distance between the two hyper-nodes $X$ and $Y$, is the following (see an example on Figure~\ref{fig:HypVerf}). $X$ initiates the verification. For this purpose, $X$ dedicates two local variables at each of its nodes: $\Add$ (to perform the addition) and $\MC$ (to broadcast the result of this addition inside $X$). Similarly, $Y$ uses the variable $\MV$ for receiving the result of the addition.

The binary addition starts at the node holding the last bit of $X$, that is node $x_k$ with $k=\SB_v$. Node $x_k$ sets $\Add_{x_k} := +$. Then, every node in $X$, but  $x_k$, proceeds as follows. For $k'<k$, if the child $x_{k''}$ of $x_{k'}$ has $\Add_{x_{k''}}=+$ and $\dB_{x_{k''}}=1$, then $x_k'$ assigns $+$ to $\Add_{x_{k''}}$. Otherwise,  if $\Add_{x_{k''}}=+$ and $\dB_{x_{k''}}=0$, the binary addition at this point does not generate a carry, and thus $x_{k'}$ assigns ``ok'' to $\Add_{x_{k'}}$. Since $\Add_{x_{k'}}=ok$, the binary addition is considered finished, and $x_{k'}$'s ancestors (parent, grand-parent, etc.) in the hyper-node assign ``ok'' to their variable $\Add$. However, if the first bit of $X$ (that is, $\dB_{x_1}$) is equal to one, then the algorithm detects an error because the addition would yield to an overflow. The result of the hyper-node binary addition is the following: if a node $x_k$ has $\Add_{x_k}=ok$, then it means that node $y_k$ holds the appropriate bit corresponding to the correct result of the addition  if and only if $\dB_{y_k}=\dB_{x_k}$. Otherwise, if $\Add_{x_k}=+$, then the bit at $y_k$ is correct if and only if $\dB_{y_k}=\overline{\dB}_{x_k}$\footnote{If $\dB_x=0$ then $\overline{\dB}_{x}=1$, and  if $\dB_x=1$ then $\overline{\dB}_{x}=0$.}.

\begin{figure}[tb]
\begin{center}
\begin{tabular}{ll}
\hspace*{-0,3cm}{\small(a)} &\hspace*{-0,45cm}\raisebox{-1,7ex}{\includegraphics[scale=0.315]{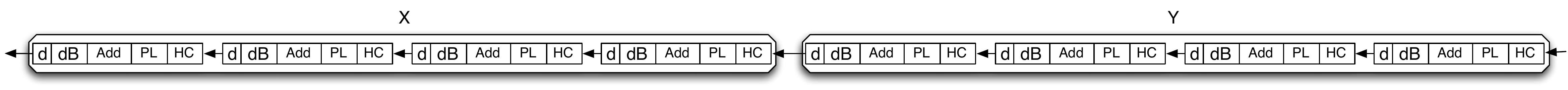}}\\
\hspace*{-0,3cm}{\small(b)} &\hspace*{-0,45cm}\raisebox{-1,7ex}{\includegraphics[scale=0.315]{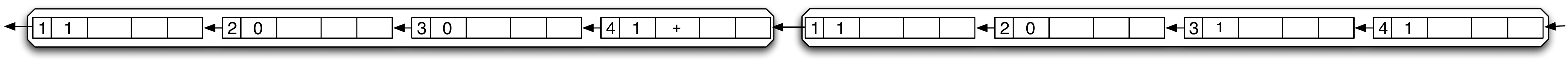}}\\
\hspace*{-0,3cm}{\small(c)} &\hspace*{-0,45cm}\raisebox{-1,7ex}{\includegraphics[scale=0.315]{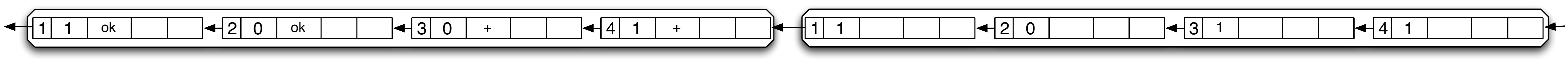}}\\
\hspace*{-0,3cm}{\small(d)} &\hspace*{-0,45cm}\raisebox{-1,7ex}{\includegraphics[scale=0.315]{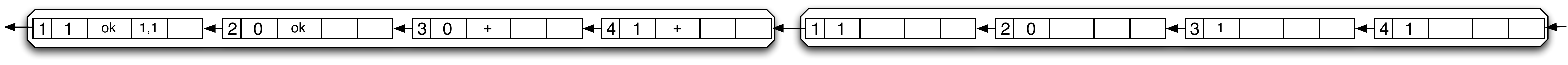}}\\
\hspace*{-0,3cm}{\small(e)} &\hspace*{-0,45cm}\raisebox{-1,7ex}{\includegraphics[scale=0.315]{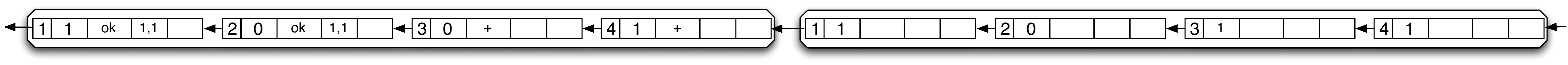}}\\
\hspace*{-0,3cm}{\small(f)} &\hspace*{-0,45cm}\raisebox{-1,7ex}{\includegraphics[scale=0.315]{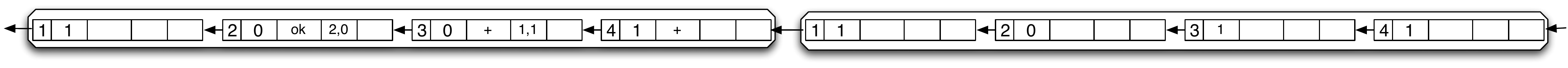}}\\
\hspace*{-0,3cm}{\small(g)} &\hspace*{-0,45cm}\raisebox{-1,7ex}{\includegraphics[scale=0.315]{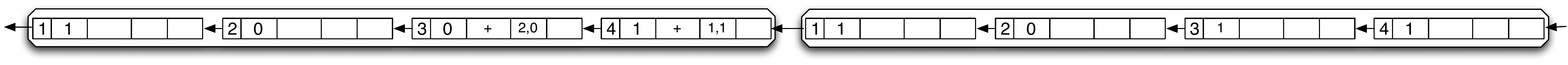}}\\
\hspace*{-0,3cm}{\small(h)} &\hspace*{-0,45cm}\raisebox{-1,7ex}{\includegraphics[scale=0.315]{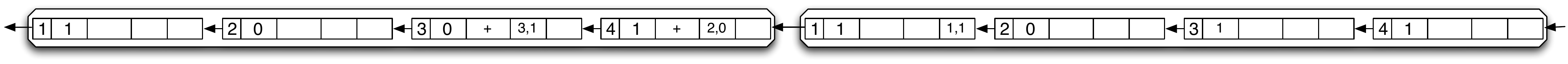}}\\
\hspace*{-0,3cm}{\small(i)} &\hspace*{-0,45cm}\raisebox{-1,7ex}{\includegraphics[scale=0.315]{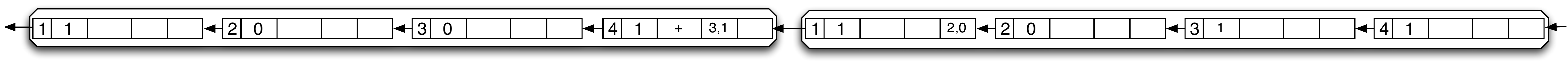}}\\
\hspace*{-0,3cm}{\small(j)} &\hspace*{-0,45cm}\raisebox{-1,7ex}{\includegraphics[scale=0.315]{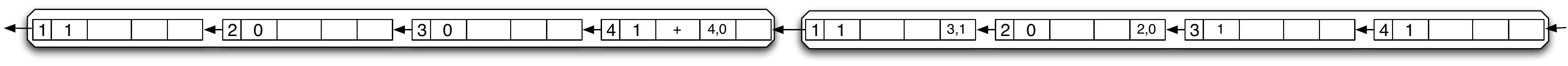}}\\
\hspace*{-0,3cm}{\small(k)} &\hspace*{-0,45cm}\raisebox{-1,7ex}{\includegraphics[scale=0.315]{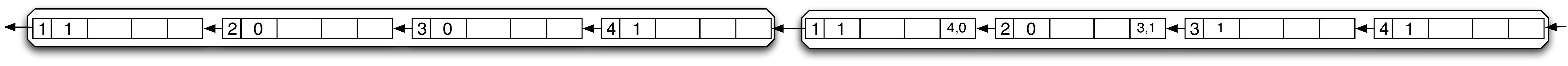}}\\
\hspace*{-0,3cm}{\small(l)} &\hspace*{-0,45cm}\raisebox{-1,7ex}{\includegraphics[scale=0.315]{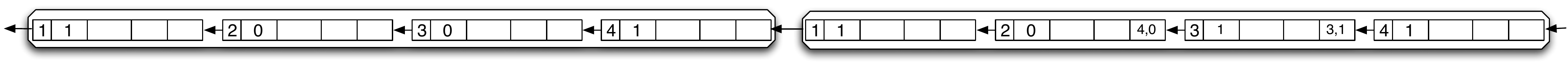}}\\
\hspace*{-0,3cm}{\small(m)} &\hspace*{-0,45cm}\raisebox{-1,7ex}{\includegraphics[scale=0.315]{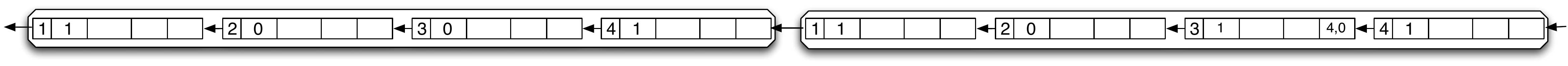}}\\
\hspace*{-0,3cm}{\small(n)} &\hspace*{-0,45cm}\raisebox{-1,7ex}{\includegraphics[scale=0.315]{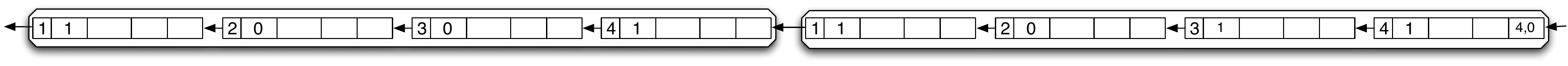}}\\
\end{tabular}
\end{center}
\caption{\footnotesize An example of distance verification between the hyper-node $X$ and its child $Y$. In this example, hyper-nodes are composed of four nodes. (a) The memory of each node is represented by three boxes storing, respectively, the distance of the node to the root, the bit-distance of the hyper-node the node belongs to, the binary addition information, plus two boxes containing information for the bit verification. (Empty boxes contains $\bot$).  (b) The last node of $X$ starts the addition. (c) The addition in $X$ is completed. (d) The first node of $X$ starts the verification. (e-g) The result of the addition is pipelined. (h) The first node $v$ of $Y$ checks $\dB_v$. (j) The second node $v'$ of $Y$ checks $\dB_{v'}$. (l) The third node $v''$ of $Y$ checks $\dB_{v''}$. (n) The last node of $Y$ detects an error.} 
\label{fig:HypVerf}
\end{figure}

The binary addition in $X$ is completed when node $x_1$ satisfies $\Add_{x_1}=+$ or \mbox{$\Add_{x_1}=ok$}. In that case, $x_1$ starts broadcasting the result of the addition. For this purpose, it sets \mbox{$\MC_{x_1}=(1,\dB_{y_1})$} where $\dB_{y_1}$ is obtained from $\Add_{x_1}$ and $\dB_{x_1}$.
Each node $x_i$ in $X$, $i > 1$, then successively perform the same operation as $x_1$. While doing so, node $x_i$ sets $\Add_{x_i}=\bot$, in order to enable the next verification. When the child of a node $x_i$ publishes $(\dis_{x_i},\dB_{x_i})$, node $x_i$ deletes $\MC_{x_i}$, in order to, again, enable the next verification. From $i=1,\dots,k$, all variables $\MC_{x_i}$  in $X$ are deleted.   
When $y_i$ sets $\MV_{y_i}[0]=\dis{y_i}$, node $y_i$ can check whether the bit in $\MV_{y_i}[1]$ corresponds to $\dB_{y_i}$. If yes, then the verification carries on. Otherwise $y_i$ detects a fault.

\subsubsection{Leader election and tree construction}

As previously mentioned, our leader election protocol simultaneously performs, together with the election of a leader,  the construction a tree  rooted at the leader. The leader should be the node whose identifier is maximal among all nodes in the ring. Our assumptions regarding identifiers are very weak. In particular, identifiers may be of various sizes, and the total number $n$  of different identifiers is not known to the nodes. In our algorithm, we use the variable $\SB$ to estimate (to some extent) the logarithm of the network size, and the variable $\Bs_v$ to propagate this estimation in the ring. More precisely, $\SB$ represents the most significant bit-position among all identities present in the ring, and we consider that all the nodes that do not carry the right value of $\SB$ in their local variables are not consistent. During the execution of the algorithm, only nodes whose identifiers match the most significant bit-position remain candidate leaders. Moreover, only candidate leaders broadcast bit-position during subsequent phases. 

\bigskip  

Let us  now detail the usage of the variables used for the election. For the sake of simplification, we introduce the variable $\ME$. 
The variable  $\ME$ is equal to a 4-tuple: 
\[
\ME_v=(\Bs_v,\Phase_v,\Bp_v,\Bc_v)
\]
This variable is essentially meant to represent the current bit-position of the candidate leaders. $\Bs_v$ represents the most significant bit-position among all identifiers, which must be in agreement with variable $\SB_v$ to assess the validity of $\ME_v$. The variables $\Phase_v$ and $\Bp_v$ of $\ME_v$ are the current phase $i$, and the corresponding bit-position revealed by a candidate leader during phase $i$, respectively. The comparison of bits-positions is relevant only if these bits-positions are revealed at the same phase. Hence, we force the system to proceed in phases. 
If, at phase $i$, the bit-position $\rho_v$ of node $v$ is smaller than the bit-position $\rho_u$ of a neighboring node $u$, then node $v$ becomes passive, and $v$ takes $u$ as parent. It is simple to compare two candidate leaders when these candidate leaders are neighbors. Yet, along with the execution of the algorithm, some nodes become passive, and therefore the remaining candidate leaders can be far away, separated by passive nodes. Each passive node is with a positive distance variable $\dis$, and is in a subtree rooted at some candidate leader. Let us now consider  one such subtree $T_v$ rooted at a candidate leader $v$. Whenever $v$ increases its phase from $i$ to $i+1$, and sets the bit-position related to phase $i+1$, all nodes $u$ in $T_v$ must update their variable $\ME_u$ in order to have the same value as $\ME_v$. 

At each phase, trees are merged into larger trees. At the end of phase $i$, all the nodes in a given tree have the same bit-position, and the leaves of the tree inform their parent that the phase is finished. The local variable $\Bc$,  called control, is dedicated to this purpose. Each leaf assigns $1$ to its control variable, and a bottom-up propagation of this control variable eventually reaches the root. In this way, the root learns that the current phase is finished. If the largest identifiers are encoded using $\log n$ bits, each phase results in halving the number of trees, and therefore of candidate leaders. So within at most $\log n$ phases, a single leader remains. To avoid electing an impostor leader, the (unique) leader restarts the election at the first phase. This is repeated forever. If an arbitrary initial configuration induces an impostor leader $\ell$, either $\ell$ has not the most significant bit-position in its identifier or this impostor leader has its most significant bit-position equal to the most significant bit-position of the (real) leader. In the former case, the error is detected by a node with the most significant bit-position. In the latter case,  error is detected by at least one node (the true leader), because there exists at least one phase $i$ where the bit-position of the leader is superior to the bit-position of the impostor.  

The process of leader election and spanning tree construction is slowed down by the hyper-node construction and management. When a node $v$ changes its parents, it also changes its variable $\dB_v$, in order not to  impact the current construction of the tree. The point is that variable $\dB_v$ should be handled with extra care to remain coherent with the tree resulting from merging different trees. To handle this, every candidate leader assigns bits for its children into its variable $\MC$. More precisely, if a root $v$ has not children, then $v$  publishes the bit for its future children with variable distance equal to one. If root $v$ has children with distance variable equal to one, then $v$ publishes the bit for the children $u$ with $\dis_u=2$, and so on, until the distance variable of $v$ becomes $\SB_v$. On the other hand, a node cannot change its parent if its future new parent does not publish the bit corresponding to its future distance variable. When the hyper-node adjacent to the root is constructed, the hyper-node verification process  takes care of the assignment of the bits to the node inside the hyper-node.

\subsection{Formal description of our algorithm $\Algo$}
\label{subsec:Pred}

In the definitions below, the notation $b \equiv P$ is used to define the boolean $b$, which is true if and only if the predicate $P$ is true. 

\subsubsection{Computing the set of children:}
{\footnotesize
$
\begin{array}{rcl}
{\tt Eq\SB}(v) & = & \{ u\in \N_v  \mid  \SB_u=\SB_v \wedge \Bs_u=\Bs_v  \}\\
{\tt EqElec}(v,X)&=&  \{u\in X \mid \ME_u=\ME_v   \}\\
{\tt EqElecP}(v)&\equiv&  \ME_v=\ME_{\p_v} \\
{\tt InfPh}(v)&=&  \{ u\in{\tt Eq\SB}(v)  \mid (\Phase_u=\Phase_v-1) \vee ((\Phase_v=1) \wedge (\Phase_u=\SB_v))  \}\\
{\tt CandC}(v)&=&\multicolumn{1}{l}{ {\tt EqElec}(v, {\tt Eq\SB}(v))\cup {\tt InfPh}(v)}\\
\rowcolor{lightgray}\Child(v) &=&\multicolumn{1}{>{\columncolor{lightgray}}l}{ \{  u\in {\tt CandC}(v) \mid \big((\dis_v<\SB_v) \wedge (\dis_u=\dis_v+1)\big)  \vee \big(\dis_v\in\{0,\SB_v\} \wedge  ( \dis_u=1)\big) \}\hspace*{1,3cm}}\\

\end{array}
$
}

\subsubsection{Hyper-nodes distance verification}
{\footnotesize
$
\begin{array}{rcl}
\VChild(v,M,T)& \equiv & \{ u\in \Child(v)\mid  M=T  \}=\Child(v)\\
\Add_{\tt Nd}(v)&\equiv &(\Add_v=\bot) \wedge (\MC_v=\bot) \wedge (\Child(v)\neq\emptyset)\hspace*{6,2cm}.\\
\Add_{\tt p}  (v)&\equiv &  \big((\dis_v>1) \wedge (\Add_{\p_v}=\bot)\big) \vee (\dis_v=1) \\
\Add_{\tt Ch}  (v)&\equiv & \big ((\dis_v<\SB_v)  \wedge   \VChild(v,\MC,\bot) \big) \vee \big((\dis_v=\SB_v) \wedge  \VChild(v,\MV,\bot) \big)\\
\Add_+(v)& \equiv &   (\dis_v=\SB_v)  \vee  (  \VChild(v,\Add,+) \wedge  \VChild(v,\dB,1))\\
\Add_{ok}(v)&\equiv &   \VChild(v,\Add,ok) \vee  \big (\VChild(v,\Add,+) \wedge  \VChild(v,\dB,0)\big) \\
\rowcolor{lightgray}\TAdd(v)&\equiv &  \Add_{\tt Nd}(v) \wedge  \Add_{\tt p}(v)  \wedge \Add_{\tt Ch}(v) \wedge \big(\Add_+(v) \vee  \Add_{ok}(v)\big)\\

\tt{MCh}(v,k) &  \equiv & \big((\dis_v<\SB_v)  \wedge \VChild(v,\MC,k)\big) \vee \big( (\dis_v=\SB_v) \wedge\VChild(v,\MV,k)\big)\\
{\tt Broad}_{\tt dB}(v)&  \equiv & [(\dis_v=1) \wedge  (\MC_v= \bot)] \vee [(\dis_v>1)\wedge (\MC_v[0]=\dis_v\text{-}1) \wedge \tt{MCh}(v,\MC_v) ]\\
{\tt Broad}_{\tt p_1}(v)& \equiv & (\MC_v=\bot) \wedge (\MC_{\p_v}[0]=1) \wedge \tt{MCh}(v,\emptyset) \\
{\tt Broad}_{\tt p_g}(v)& \equiv & \MC_v \wedge \tt{MCh}(v,\MC_v) \wedge (\MC_{\p_v}[0]=\MC_v[0]+1)\wedge  (\MC_v[0]\neq \dis_v\text{-}1) \\
{\tt Broad}_{\tt p}(v)  &  \equiv & (\dis_v>1) \wedge ({\tt Broad}_{\tt p_1}(v) \vee {\tt Broad}_{\tt p_g}(v))\\
\rowcolor{lightgray}\TBroad(v)  &  \equiv & (\Child(v)\neq\emptyset) \wedge ({\tt Broad}_{\tt dB}(v)\vee   {\tt Broad}_{\tt p}(v))\\

{\tt Vrf}_{\tt Last}(v,M)& \equiv & \neg\Child(v) \wedge (M_{\p_v}[0]=\dis_v+1) \\
{\tt Vrf}_{\tt Start}(v,M)& \equiv & \Child(v) \wedge (M_{\p_v}[0]=\dis_v+1) \wedge  \VChild(v,\MV,\emptyset)\\
{\tt Vrf}_{\tt Broad}(v,M)& \equiv & \Child(v) \wedge (M_{\p_v}[0]>\dis_v) \wedge  \VChild(v,\MV,\MV_v)\\
{\tt Vrf}(v,M)& \equiv &\big({\tt Vrf}_{\tt Last}(v,M) \vee {\tt Vrf}_{\tt Start}(v,M) \vee {\tt Vrf}_{\tt Broad}(v,M)\big)\\
{\tt Vrf}_{\tt 1}(v)& \equiv & (\dis_v=1)  \wedge ( \MC_{\p_v}[0]=\MV_v[0]+1)\wedge {\tt Vrf}(v,\MC) \\
{\tt Vrf}_{\tt g}(v)& \equiv &(\dis_v>1)  \wedge (\MV_{\p_v}[0]=\MV_v[0]+1) \wedge (\ME_{\p_v}=\ME_v) \wedge {\tt Vrf}(v,\MV) \\
{\tt Vrf}_{\tt 1g}(v)& \equiv & \big((\dis_v=1)  \wedge (\MC_{\p_v}[0]=1)\big) \vee \big((\dis_v>1)  \wedge (\MV_{\p_v}[0]=\dis_v) \big)\\
 \rowcolor{lightgray}\TVerif(v) & \equiv &\big [ (\MV_v=\emptyset) \wedge {\tt Vrf}_{\tt 1g}(v)\big] \vee \big [ (\MV_v\neq \emptyset) \wedge \big ({\tt Vrf}_{\tt 1}(v) \vee {\tt Vrf}_{\tt g}(v)\big )\big] \\

{\tt EqElecN}(v,X)&\equiv &  \{  u\in X \mid (\ME_u=\ME_v) \}=\N_v\\
{\tt CleanM}_{\tt V1A}(v)& \equiv &  {\tt EqElecN}(v, {\tt Eq\SB}(v)) \wedge (\MV_v[0]=\SB_v) \wedge [\dis_v=\SB_v \vee (\dis_v\neq\SB_v \wedge \VChild(v,\MV,(\text{-}1,\text{-}1)))]\\
{\tt CleanM}_{\tt V1B}(v)& \equiv &   {\tt EqElecN}(v, {\tt Eq\SB}(v)) \wedge \MV_v=(\SB_v,0) \wedge [\dis_v=\SB_v \vee (\dis_v\neq\SB_v \wedge \VChild(v,\MV,\bot))]\\
{\tt VHC}(v)& \equiv &  \MV_v=(\text{-}1,\text{-}1) \wedge \MV_{\p_v}=(\text{-}1,\text{-}1)\\
{\tt CleanM}_{\tt VA}(v)& \equiv &  {\tt VHC}(v) \wedge [(\dis_v<\SB_v  \wedge \VChild(v,\MV,\emptyset)) \vee (\dis_v=\SB_v \wedge\VChild(v,\MC,\bot))]\\
{\tt CleanM}_{\tt VB}(v)& \equiv &  \MV_v=(\text{-}1,\text{-}1) \wedge (\dis_v=1)  \wedge \VChild(v,\MV,\bot)\\
{\tt CleanM}_{\tt VD}(v)& \equiv &(\dis_v<\SB_v) \wedge (\Child(v)\neq\emptyset) \wedge \VChild(v,\MV,\bot)\\
{\tt CleanM}_{\tt VC}(v)& \equiv &(\MV[0]=\SB_v)\wedge \Big(\big((\dis_v=\SB_v) \wedge (\Child(v)=\emptyset)\big) \vee  {\tt CleanM}_{\tt VD}(v)\Big)\\
{\tt CleanM}_{\tt C}(v)& \equiv &  (\MC_v[0]=\dis_v) \wedge   \tt{MCh}(v,\MC_v)  \wedge  (((\MC_{\p_v}=\bot) \wedge (\dis_v>1)) \vee (\dis_v=1) )\\
\rowcolor{lightgray} \TCleanM(v)& \equiv &    {\tt CleanM}_{\tt V1A}(v) \vee {\tt CleanM}_{\tt V1B}(v) \vee  {\tt CleanM}_{\tt VA}(v) \vee  {\tt CleanM}_{\tt VB}(v) \vee {\tt CleanM}_{\tt C}(v)\\
\Root(v) & \equiv &   (\leader_v=1) \wedge (\dis_v=0) \wedge (\p_v=\emptyset) \wedge  (\SB_v=\Bit(1,\id_v)) \\   
\rowcolor{lightgray}\TStartdB (v)& \equiv &\Root(v) \wedge (\Child(v)\neq \emptyset) \wedge \VChild(v,\MV,\MC_v) \\
\end{array}
$
}

\subsubsection{Leader election and tree construction}

{\footnotesize
$
\begin{array}{lcl}

{\tt Max\SB}(v)& = & \max \{\Bs_u\mid u\in \N_v \}\\ 
{\tt NeigMax\SB}(v)& = &  \{  u\in \N_v \mid\Bs_u={\tt Max\SB}(v)\}\\ 
{\tt EqPh}(v,X)&=&  \{  u\in X \mid (\Phase_u=\Phase_v)   \}\\
{\tt MaxEqP}(v)& = & \max\{ \Bp_u \mid u\in  {\tt EqPh}(v,\N_v) \wedge  \Bp_u>\Bp_v \}\\ 
{\tt NeigMaxEqP}(v)& = & \{ u\in  {\tt MaxEqP}(v) \mid \Bp_u= {\tt MaxEqP}(v)  \}\\   
\Best(v)& = &\left\{
\begin{array}{ll} 
\min \{{\tt port}_u \mid u\in   {\tt NeigMax}\SB(v)\} & \mbox{if }  {\tt Max}\SB(v)>\SB_v \\
\min \{{\tt port}_u \mid  u\in  {\tt NeigMaxEqP}(v)\} & \mbox{otherwise} \\
\end{array} \right.\\ 
\Pass_0(v,x)& \equiv &\big((\dis_x=0) \wedge (\MC_x=(1,0))\big)\\
\Pass_{\tt dB}(v,x)& \equiv &\Pass_0(v,x)\vee \big( (0<\dis_x<\SB_x) \wedge (\MV_x[0]=\dis_x+1)\big)\vee \big((\dis_x=\SB_x) \wedge (\MC_x[0]=1)\big)\\
{\tt EqMEi}(i,X)& = & \{ i\in\{0,...,i\}: \ME_u[i]=\ME_v[i]  \mid u\in X \}\\ 
{\tt EqMEx}(i,x,X)& = & \{ \ME_u[i]=x \mid u\in X\}\\ 
\rowcolor{white}{\tt Wave_B}(v) & \equiv & \big({\tt EqMEi}(2,\Child(v))=\Child(v) \wedge {\tt EqMEx}(3,1,\Child(v))= \Child(v)\big ) \vee \neg\Child(v)\\
\rowcolor{lightgray} \TPass(v)& \equiv &\Best(v)\wedge  \Pass_{\tt dB}(v,\Best(v)) \wedge  {\tt Wave_B}(v)  \\
{\tt SupPh}(v)&=&  \{  u\in{\tt Eq\SB}(v) \mid  (\Phase_u=\Phase_v+1) \vee (\Phase_v=\SB_v \wedge \Phase_u=1) ) \}\\
{\tt Other}(v)&=& \{ u\in{\N_v-\Child(v)-{\tt SupPh}(v)} \mid  \ME_v=\ME_u  \}\\
\rowcolor{lightgray}\TInc(v) & \equiv & \Root(v) \wedge {\tt Wave_B}(v)\wedge  (\Child(v)\cup{\tt SupPh}(v)\cup{\tt Other}(v)=\N_v) \\
\rowcolor{white}{\tt Coh_p}(v)& \equiv & (\Best(v)=\emptyset) \wedge  (\p_v\in\N_v) \wedge (\SB_v=\SB_{\p_v}) \wedge (\dis_v=\dis_{\p_v}-1)\\
\Up_{\tt p}(v) & \equiv & ( \Bc_{\p_v}=0) \wedge   \big( (\Phase_{\p_v}=\Phase_v+1) \vee ((\Phase_v=\SB_v) \wedge( \Phase_{\p_v}=1))  \big)\\
\Up_{\tt E}(v) & \equiv & (\Bc_v=0) \wedge ( \Bc_{\p_v}=0)\\
\Up_{\tt Back}(v) & \equiv & \Up_{\tt E}(v) \wedge ({\tt EqMEi}(v,2,\N_v)=\N_v) \wedge ({\tt EqMEx}(2,\Child(v),1)=\Child(v))\\
\rowcolor{lightgray}\TUp(v) & \equiv &{\tt Coh_p}(v) \wedge (\Up_{\tt p}(v)   \wedge {\tt Wave_B}(v)) \vee  \Up_{\tt Back}(v) \\
\end{array}
$
}

\subsubsection{Reset and Error detection}
\label{Pred:Errors}
{\footnotesize
$
\begin{array}{rcl}
\tt{MReset}(v) & \equiv &    (\dB_v=0) \wedge (\Add_v=\bot)  \wedge (\MV_v=\bot)\\                    
\tt{VReset}(v) & \equiv & (\leader_v=0)  \wedge  (\dis_v=\text{-}1)  \wedge  (\SB_v=\text{-}1) \wedge (\ME_v=(\text{-}1,0,\text{-}1,\text{-}1)) \\     
\Nd{\tt Reset}(v)  & \equiv & \tt{MReset}(v) \wedge (\MC_v=\bot)  \wedge \tt{VReset}(v)\\

\Nd{\tt Start}(v) & \equiv &  \tt{MReset}(v) \wedge (\MC_v\neq\bot) \wedge \Root(v) \wedge (\ME_v=(\Bit(1,\id_v),1,\Bit(1,\id_v),\{0,1\}))\\
\Neig{\tt Reset}(v) & = & \{u\in \N_v \mid \Nd{\tt Reset}(u)   \}\\
\Neig{\tt Start}(v)& = & \{ u\in \N_v \mid \Nd{\tt Start}(u) \}\\
\rowcolor{lightgray} \TReset (v) & \equiv & \neg \Nd{\tt Reset}(v) \wedge \neg \Nd {\tt Start}(v) \wedge |\Neig{\tt Reset}(v)|>0 \\     
\rowcolor{lightgray}\TStart(v)&   \equiv &\Nd{\tt Reset}(v) \wedge  \big (\{\Neig{\tt Reset}(v)\cup  \Neig{\tt Start}(v)\}=\N_v \big) \\
\inNd(v)  & \equiv &   (\leader_v=0) \wedge (\dis_v>0) \wedge  {\tt Coh_p}(v) \wedge  (\SB_v\geq\Bit(1,\id_v)) \wedge (\Bs_v=\SB_v) \\

\E_{\tt d}(v)  & \equiv &   (\dis_v>0) \wedge   \wedge ({\tt Best(v)}=\emptyset) \neg\big( (\dis_{\p_v}<\SB_v \wedge \dis_v\neq \dis_{\p_v}+1)\vee (\dis_{\p_v}\in\{0,\SB_v\} \wedge \dis_v\neq 1)\big)\\
\rowcolor{LLgray}\E_{\tt Nd}(v)& \equiv &  ( \neg\Root(v)  \wedge \neg\inNd(v) \wedge \neg \Nd{\tt Reset}(v)) \vee (\inNd(v) \wedge \E_{\tt d}(v)) \\        
{\tt NgPh}(v)&=&\{  \Phase_u \mid  u\in \N_v\}\\
\E_{\tt PhMinB}(v)& \equiv & \big((\Phase_v=\SB_v) \wedge (\min\{{\tt NgPh}(v)\}\not \in\{1,\SB_v\text{-}1,\SB_v\})\big))\\
\E_{\tt PhMinG}(v)& \equiv & \big((\Phase_v\neq\SB_v) \wedge (\Phase_v-\min \{{\tt NgPh}(v)\}>1)\big)\\
\E_{\tt PhMax1}(v)&\equiv & \big((\Phase_v=1) \wedge (\max\{{\tt NgPh}(v)\}\not \in\{2,\SB_v\})\big)\\
\E_{\tt PhMaxG}(v)&\equiv & \big((\Phase_v\neq1) \wedge (\max \{{\tt NgPh}(v)\} - \Phase_v>1)\big) \\
\rowcolor{LLgray}\E_{\tt Phase}(v)&\equiv & \big(\E_{\tt PhMinB}(v) \vee \E_{\tt PhMinG}(v)  \vee \E_{\tt PhMax1}(v) \vee \E_{\tt PhMaxG}(v)\big)\\
\rowcolor{LLgray}\E_{\tt Bp}(v)& \equiv &   (\dis_v>0) \wedge \{\{   u\in {\tt EqPh}(v,{\tt Eq\SB}(v))\mid \Bp_u=\Bp_v \}\cup {\tt SupPh}(v) \cup \{\Best(v)\}\}=\emptyset\\
\rowcolor{LLgray}\E_{\tt Control}(v)& \equiv & ((\Bc_v=1) \wedge  \neg {\tt Wave_B}(v)) \vee  ((\Bc_v=0) \wedge (\Bc_{\p_v}=1))\\
\end{array}
$
}

\hspace*{-0,6cm}
{\footnotesize
$
\begin{array}{rcl}
\E_{\tt MRoot}(v)& \equiv & (\dis_v=0) \wedge \big((\Add_v\neq \bot) \vee (\MV_v\neq \bot)\big)\\
\E_{\tt MAdd}(v)& \equiv & (0<\dis_v<\SB_v) \wedge (\Add_v\neq \bot) \wedge \VChild(v,\Add,\bot)\hspace*{6cm}.\\
\E_{\tt PL}(v)& \equiv & (\dis_v>1) \wedge \big((\MC_v[0]>\dis_v) \vee ( (\Best(v)=\emptyset) \wedge (\MC_{\p_v}=\bot)\wedge (\MC_v[0]<\dis_{\p_v})\big)\\
\E_{\tt HCch}(v)& \equiv &  (\dis_v<\SB_v) \wedge (\MV_v=\bot) \wedge \VChild(v,\MV,\neg\bot)\\
\E_{\tt HCp}(v)& \equiv & (\dis_v>0) \wedge (\MV_v\neq\bot) \wedge (\Best(v)=\emptyset) \wedge (\MV_{\p_v}[0]<\MV_v[0])\\
\E_{\tt HC}(v)& \equiv & (0<\dis_v<\SB_v) \wedge \big( (\MV_v[0]<\dis_v) \vee \E_{\tt HCch}(v) \vee \E_{\tt HCp}(v)  \big)\\
\rowcolor{LLgray}\E_{\tt Mem}(v)& \equiv & \E_{\tt MRoot}(v) \vee \E_{\tt MAdd}(v) \vee \E_{\tt PL}(v) \vee \E_{\tt HC}(v) \\
\rowcolor{lightgray}\E_{\tt T}(v)& \equiv &  (\Neig{\tt Reset(v)}=\emptyset) \wedge \big(\E_{\tt Nd}(v) \vee \E_{\tt Phase}(v) \vee \E_{\tt Bp}(v) \vee \E_{\tt Control}(v) \vee \E_{\tt Mem}(v)\big)\\
\E_{\tt Add }(v)& \equiv &  (\dis_v=1) \wedge (\dB=1)\wedge \VChild(v,\Add,+) \wedge  \VChild(v,\dB,1)\\
\E_{\tt  H1}(v)& \equiv &(\dis_v=1) \wedge (\MC_{\p_v}[0]=\dis_v) \wedge (\MC_{\p_v}[1]\neq \dB_v)\\
\E_{\tt  Hg}(v)& \equiv &(\dis_v>1)  \wedge (\MV_{\p_v}[0]=\dis_v) \wedge (\MV_{\p_v}[1]\neq \dB_v)\\
\rowcolor{LLgray}\E_{\tt  Hyper}(v)& \equiv &  \E_{\tt Add }(v) \vee \big((\ME_v=\ME_{\p_v}) \wedge  (\E_{\tt  H1}(v) \vee \E_{\tt  Hg}(v))\big)\\
\rowcolor{LLgray}\E_{\tt Elec}(v)& \equiv & (\dis_v=0) \wedge  (\SB_v=\Bit(1,\id_v)) \wedge (\Bp_v<\Bit(\Phase_v,\id_v))\\
\rowcolor{lightgray} \Error(v)&   \equiv & \E_{\tt T}(v) \vee \E_{\tt  Hyper}(v)  \vee \E_{\tt Elec}(v)  \\

\end{array}
$
}

\subsubsection{Commands}

\subsubsection*{Commands for hyper-node distance verification:}

\begin{minipage}{8cm}
{\footnotesize
\vspace*{-1,1cm}
\begin{description}
\item $\Verification(v)$:\\
$(\dis_v=1) \wedge (V_{\p\dis}(v) \vee V_{\p 1}(v))\rightarrow \MV_v=\MC_{\p_v}$\\
$(\dis_v>1) \wedge (V_{\p\dis}(v) \vee V_{\p \bar{1}}(v)) \rightarrow \MV_v=\MV_{\p_v}$
\item $\CleanM(v)$:\\
${\tt CleanM}_{\tt V1A}(v) \vee {\tt CleanM}_{\tt V1B}(v) \rightarrow \MV_v=(\text{-}1,\text{-}1)$\\
${\tt CleanM}_{\tt VA}(v) \vee  {\tt CleanM}_{\tt VB}(v) \rightarrow \MV_v=\bot$\\
${\tt CleanM}_{C}(v) \rightarrow \MC_v=\bot$\\
\end{description}
}
\end{minipage}
\begin{minipage}{8cm}
{\footnotesize
\begin{description}
\item $\BinaryAd(v)$:\\
$\Add_{+}(v) \rightarrow \Add_v=+$\\
$\Add_{ok}(v) \rightarrow  \Add_v=ok$\\
\item $\Broadcast(v)$:\\
${\tt Broad}_{\tt dB}(v)\wedge (\Add_v=ok) \rightarrow$\\
\hspace*{1.5cm} $ \MC_v=(\dis_v,\dB_v);\Add_v=\bot$\\
${\tt Broad}_{\tt dB}(v)\wedge (\Add_v=+) \rightarrow$\\
\hspace*{1.5cm} $ \MC_v=(\dis_v,\overline{\dB}_v);\Add_v=\bot$\\
${\tt Broad}_{\tt p}(v)   \rightarrow \MC_v=\MC_{\p_v}$\\
\end{description}
}
\end{minipage}

\subsubsection*{Commands for the leader election and tree construction:}

\begin{minipage}{8cm}
{\footnotesize
\begin{description}
\item $\Inc(v) :$\\
$T \equiv (\Bp_v=\text{-}1) \wedge (\Phase_v=\SB_v+1)$\\
$T\rightarrow i:=1$; $\neg T\rightarrow i:=\Phase_v+1;$\\
$ \ME_v:= (\Bit(1,v),i,\Bit(i,v),0);$
\item $\StartdB(v)$:\\
$(\MC_v[0]=\SB_v) \rightarrow \MC_v:=(1,0)$\\
$(\MC_v[0]<\SB_v) \rightarrow \MC_v:=(\MC_v[0]+1,0)$
\item \Update(v):\\
${\tt Wave_B}(v)  \wedge \Up_{\tt p}(v) \rightarrow \ME_v=\ME_{\p_v}$\\
$\Up_{\tt Back}(v)\rightarrow \Bc_v=1$\\
\end{description}
}
\hspace*{1cm}
\end{minipage}
\vspace*{-1,1cm}
\begin{minipage}{8,5cm}
{\footnotesize
\vspace*{-2,3cm}
\begin{description}
\item $\Passive(v):$\\
$\leader_v:=0$; $\p_v:=\Best(v)$; $\dB_v:=\MV_{\p_v}[0]$;\\
$(\dis_{\p_v}=\SB_{\p_v})\rightarrow \dis_v:=1$\\
$(\dis_{\p_v}<\SB_{\p_v})\rightarrow \dis_v:=\dis_{\p_v}+1$\\
$\SB_v:=\SB_{\p_v}$; $\ME_v:=\ME_{\p_v}$\\
$\Add_v:=\bot$; $\MV_v:=\bot$; $\MC_v:=\bot$;\\
\end{description}
}
\end{minipage}

\subsubsection*{Commands activated after error detection:}

\begin{minipage}{8,5cm}
{\footnotesize
\begin{description}
\item $\Reset(v):$\\
\hspace*{-0.5cm}  $\leader_v:=0$; $\p_v:=\emptyset$; $\dis_v:=\text{-}1$; $\dB_v:=0$;\\
\hspace*{-0.5cm} $\SB_v:=\text{-}1$; $\ME_v:=(\text{-}1,0,\text{-}1,\text{-}1)$;\\
\hspace*{-0.5cm} $\Add_v:=\bot$; $\MV_v:=\bot$; $\MC_v:=\bot$;\\
\end{description}
}
\end{minipage}
\begin{minipage}{8,5cm}
{\footnotesize
\begin{description}
\item $\Start(v):$\\
\hspace*{-0.5cm}  $\leader_v:=1$; $\dis_v:=0$;\\
\hspace*{-0.5cm}  $\SB_v=\Bit(1,v)$; $\ME_v=(\Bit(1,v),1,\Bit(1,v),0)$; \\
\hspace*{-0.5cm}  $\MC_v=(1,0)$;\\
\end{description}
}
\end{minipage}

\newpage

\section{Correctness}
\label{sec:Correctness}

In this section, we formally prove the correctness of our Algorithm. 

\begin{theorem}
\label{theo:election}
Algorithm $\Algo$ solves the leader election problem in a self-stabi\-li\-zing manner for the $n$-node ring, in the state model, with a distributed weakly-fair scheduler. Moreover, if the $n$ node identities are in the range $[1,n^c]$ for some $c\geq 1$, then Algorithm $\Algo$ uses $O(\log \log n)$ bits of memory per node, and stabilizes in $O(n \log^2 n)$ rounds. 
\end{theorem}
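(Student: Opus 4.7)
\noindent The plan is to decompose the statement into four independent claims and prove each in turn: the per-node memory count, closure of a legitimate invariant, convergence from an arbitrary initial configuration, and the stabilization-time bound.

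For the memory bound, a per-variable inspection suffices. Each of $\dis_v$, $\Phase_v$, $\SB_v$, $\Bs_v$, $\Bp_v$ and the first component of $\MC_v$ and $\MV_v$ ranges over a set of size $O(\log n)$, hence uses $O(\log\log n)$ bits; all other fields ($\leader_v$, $\dB_v$, $\Bc_v$, $\Add_v$, and the second components of $\MC_v$ and $\MV_v$) carry $O(1)$ bits. The hypothesis that identities lie in $[1,n^c]$ is needed only so that $\Bit(x,v)$ itself returns an $O(\log\log n)$-bit index.

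Closure is checked against the invariant $L$ asserting that (i)~a unique node $\ell$ satisfies $\Root(\ell)$; (ii)~parent pointers form an inward spanning tree rooted at $\ell$ with $\dis_v=1+(\dis_{\p_v}\bmod\lfloor\log n\rfloor)$; (iii)~hyper-nodes carry consecutive $\B_X$ values increasing by one from a parent hyper-node to its child; (iv)~the tuple $\ME_v$ is uniform in the tree and cycles through the non-zero bit-positions of $\id_\ell$. A rule-by-rule inspection then shows that $\R_{\tt Error}$ and $\R_{\tt Start}$ stay disabled inside $L$, while $\R_{\tt Root}$, $\R_{\tt Update}$ and $\R_{\tt HyperNd}$ only advance phases and propagate verifications without changing parenthood, so $L$ is closed.

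Convergence is obtained through three nested attractors. First, any locally inconsistent configuration (a bad distance, a phase gap exceeding one, an $\Add/\MC/\MV$ cell contradicting the hyper-node rules, or a parenthood cycle spanning the whole ring) eventually fires $\R_{\tt Error}$: local inconsistencies are caught by the component predicates defining $\Error(v)$, while a full spanning cycle is caught by the hyper-node distance check, which cannot be simultaneously satisfied all around the cycle because $2^{\lfloor\log n\rfloor}\geq n/\log n$ forces at least two adjacent hyper-nodes on the cycle to carry $\B$-values that differ by something other than one. The ensuing $\Reset$ cascade followed by $\R_{\tt Start}$ leaves every node a candidate leader in $O(n)$ rounds. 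Second, from an all-candidate configuration the phased bit-position election shrinks the set of surviving candidates: at each phase, neighbouring candidates in the tree of roots reveal the same indexed bit-position, and by $\R_{\tt Passive}$ the one with the smaller value becomes passive. Since distinct identifiers eventually differ at some bit-position, after at most $\lceil\log n\rceil+1$ phases exactly one candidate remains, and the tree completes into the invariant~$L$.

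The central technical obstacle, and the source of the $O(n\log^2 n)$ bound, is costing a single phase. A phase combines a top-down propagation of the new $\ME$ tuple through each tree, a hyper-node binary addition taking $O(\log n)$ rounds per hyper-node pair, a pipelined broadcast of its result along the tree, the per-node $\dB$ verification, and a bottom-up $\Bc$ wave that must complete before $\Inc$ can fire at the root. Under the weakly-fair distributed scheduler these sub-protocols interleave but do not block each other, and a potential-function argument combining the height of each tree, the number of surviving candidates, and the progress of the current $\Bc$ wave should show that a single phase costs $O(n\log n)$ rounds; multiplied by the $O(\log n)$ phases this yields the claimed bound. The delicate point, and where I expect the proof to require the most care, is verifying that transient tree merges and ongoing hyper-node verifications cannot interact badly enough to trigger fresh $\R_{\tt Error}$ activations once the first attractor has been reached, so that no spurious full-ring reset restarts the clock.
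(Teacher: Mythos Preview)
Your overall decomposition---memory count, closure of a legitimate set, a chain of attractors, and a per-phase cost analysis---matches the paper's strategy. The memory paragraph is correct and essentially identical to the paper's Lemma~9. The time arithmetic ($O(\log n)$ phases times $O(n\log n)$ per phase) is also the paper's accounting. However, there is one genuine gap in the convergence argument and one structural difference worth flagging.

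\medskip
\noindent\textbf{The missing attractor: impostor leaders.} Your first attractor fires $\R_{\tt Error}$ on ``a bad distance, a phase gap exceeding one, an $\Add/\MC/\MV$ cell contradicting the hyper-node rules, or a parenthood cycle spanning the whole ring.'' But there is a fourth bad case your list omits: a configuration that is locally consistent, cycle-free, and spanned by a single tree rooted at some node $\ell$ with $\id_\ell$ \emph{not} maximal. No component of $\Error(v)$ fires immediately in such a configuration; the only way it is detected is via $\E_{\tt Elec}$, and that requires the root to cycle through up to $\lfloor\log n\rfloor$ phases until some node $v$ with $\Bit(i,\id_v)>\Bit(i,\id_\ell)$ sees the discrepancy. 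The paper isolates this as a separate attractor $\Gamma_{\tt IEF}$ (Lemma~5), reached from the cycle-free set in $O(n\log n)$ rounds. Without this step, your chain ``bad configuration $\to$ reset $\to$ all-candidate $\to$ unique leader'' is incomplete: an impostor configuration is not ``bad'' in your sense, yet it is not all-candidate either.

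\medskip
\noindent\textbf{A finer attractor chain versus a global-reset reduction.} Relatedly, the paper does \emph{not} reduce every illegitimate configuration to the all-candidate state. Its chain is $\Gamma \triangleright \Gamma_{\tt TEF} \triangleright \Gamma_{\tt CF} \triangleright \Gamma_{\tt IEF} \triangleright \Gamma_{\tt LE}$, with explicit potential functions $\Psi,\Phi,\Lambda,\Rho,{\tt E},L$ and a closure lemma for each level. Your coarser reduction (anything wrong $\Rightarrow$ full reset $\Rightarrow$ restart election from scratch) would also work for the final bound, but you then owe a proof that \emph{every} illegitimate-but-locally-consistent configuration eventually triggers \emph{some} reset, and the impostor case above shows this is not automatic. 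The paper's finer chain sidesteps this by proving progress within each level without forcing a reset unless one is actually triggered.

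\medskip
\noindent\textbf{On the ``delicate point.''} You are right that ruling out spurious resets after the first attractor is where the work lies; the paper spends the bulk of Lemmas~2 and~4 on exactly this (case analysis over every command against every error sub-predicate). Your ``rule-by-rule inspection then shows'' is the correct plan, but it is not yet a proof.
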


The main difficulty for proving this theorem is to prove that $\Algo$ can detect any cycle generated by the parenthood relation in the initial configuration, and can, whenever a cycle is detected, remove this cycle. Let $\Gamma$ be the set of all possible configurations of the ring, under the set of variables described before in the paper. First, we prove that Algorithm $\Algo$  detects the presence of ``trivial'' errors, that is, inconsistencies between neighbors. Second, we prove that, after correcting all the trivial errors (possibly using a reset), $\Algo$ converges and maintains configurations free of  trivial errors. The set of configurations free of  trivial errors is denoted by  $\Gamma_{\tt TEF}$ where TEF stands for ``Trivial Error Free''. From now on, we assume only configurations from $\Gamma_{\tt TEF}$.

The core of the proof regarding proper cycle detection is based on proving the correctness of the hyper-node distance verification process. This verification process is the most technical part of the algorithm, and proving its correctness is actually the main challenge in the way of establishing Theorem~\ref{theo:election}. This is achieved by using proofs based on invariance arguments. 

Once the correctness of the hyper-node distance verification process has been proved, we establish the convergence of Algorithm $\Algo$ from an arbitrary configuration in $\Gamma_{\tt TEF}$ to a configuration without cycles, and where all hyper-node distances are correct. The set of configurations without cycles is denoted by $\Gamma_{\tt CF}$ (where CF stands for ``Cycle Free'').
We prove that a configuration is in $\Gamma_{\tt CF}$ if and only if all hyper-node distances are correct.
 Once we can restrict ourselves to configurations in  $\Gamma_{\tt CF}$, we prove the correctness of our mechanisms detecting and removing impostor leaders. We denote by  $\Gamma_{\tt IEF}$ (where IEF stands for ``Impostor leader Error Free'') the set of configurations with no impostors. Finally, assuming a configuration in $\Gamma_{\tt IEF}$, we prove that the system reaches and maintains a configuration with exactly one leader, equal to the node with maximum identity. Moreover, we prove that the structure induced by the parenthood relation is a tree rooted at the leader, and spanning all nodes. We denote by $\Gamma_{\tt LE}$ the set of configurations where the unique leader is the node with maximum identity. In other words, we prove that $\Algo$  is self-stabilizing for $\Gamma_{\tt LE}$. 

In the statements of the lemmas below, we define predicates on configurations, these predicates are used as attractors toward a legitimate configuration (i.e., a configuration with unique  leader). To establish convergence toward attractors, we use potential functions~\cite{T09bc}, that is, functions that map configurations to non-negative integers, and that strictly decrease after each algorithm command is executed. 

Before starting the proofs, let us first define the Predicate $\Gamma_{\tt LE}$ (\emph{Leader Election}) that serves as the definition for legitimate configurations.
Let $L: \Gamma\rightarrow\mathbb{N}$ be the function defined by
$$L(\gamma)=\sum_{v \in V}\leader_v~.$$
A configuration $\gamma$ is legitimate for the leader election specification, i.e., satisfies $\Gamma_{\tt LE}$, if and only if $L(\gamma)=1$.   That is, 
\[
\Gamma_{\tt LE}=\{\gamma\in\Gamma : L(\gamma)=1\}.
\]
Now, for the purpose of the proof, let us define Predicate $\Gamma_{\tt TEF}$ (\emph{Trivial Error Free}).
Let $\psi: \Gamma\times V \rightarrow \mathbb{N}$ be the function defined by:
$$\psi(\gamma,v)=
\left\{
  \begin{array}{ll}
    1 &\text{if } \E_{\tt T}(v) \; \mbox{is true}\\
    0 &\text{otherwise} \\
  \end{array}
\right.
$$
Let $\Psi: \Gamma \rightarrow\mathbb{N}$ be the function defined by:
$$\Psi(\gamma)=\sum_{v\in V}\psi(\gamma,v)~.$$ 
Note that all nodes are legitimate with respect to $\E_{\tt T}(v)$ if and only if  $\Psi(\gamma)=0$. We define 
\[
\Gamma_{\tt TEF}=\{\gamma\in\Gamma : \Psi(\gamma)=0\}.
\]

\begin{lemma}
\label{lem:TEF}
$true \triangleright \Gamma_{\tt TEF}$ in one round.
\end{lemma}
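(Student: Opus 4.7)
The approach is straightforward: show that rule $\R_{\tt Error}$ eliminates every trivial error within one round, using weak fairness and the special structure of $\E_{\tt T}$. The key observation is that the leading conjunct $\Neig{\tt Reset}(v)=\emptyset$ in the definition of $\E_{\tt T}(v)$ is precisely designed to suppress trivial errors as soon as a neighbor resets.

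First, I would note that $\E_{\tt T}(v)\Rightarrow\Error(v)$, so the guard of $\R_{\tt Error}$ is satisfied at every node $v$ with $\psi(\gamma,v)=1$. Under a weakly fair scheduler, within one round every such $v$ either executes $\R_{\tt Error}$ (and hence the command $\Reset(v)$) or sees its guard become disabled, which for $\R_{\tt Error}$ forces $\E_{\tt T}(v)$ to be false. Second, I would verify that executing $\Reset(v)$ does not create new trivial errors at $v$'s neighbors: after $\Reset(v)$ the node $v$ satisfies $\Nd{\tt Reset}(v)$, so for any neighbor $u$ we have $v\in\Neig{\tt Reset}(u)$, which invalidates the first conjunct in $\E_{\tt T}(u)$ and thus makes $\E_{\tt T}(u)$ false.

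The delicate part is the third step: checking that $\E_{\tt T}(v)$ itself becomes false after $v$'s reset. A sub-predicate inspection shows that with $\dis_v=-1$, $\Phase_v=0$, $\SB_v=-1$, and the memory variables at default, each of $\E_{\tt Nd}(v)$, $\E_{\tt Bp}(v)$, $\E_{\tt Control}(v)$, $\E_{\tt Mem}(v)$ fails (each disjunct requires $\dis_v\geq 1$ or a specific non-default configuration). The only potentially surviving disjunct is $\E_{\tt Phase}(v)$ via $\E_{\tt PhMaxG}$, which needs some neighbor $u$ with $\Phase_u\geq 2$. In that case, once $v$ has reset the neighbor $u$ has $\Nd{\tt Reset}(v)$ in its own neighborhood, so $\TReset(u)$ is enabled, and by weak fairness within the round $u$ too performs $\Reset$; this populates $\Neig{\tt Reset}(v)$ and falsifies $\E_{\tt T}(v)$.

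Finally, I would argue closure under concurrent executions of the other rules ($\R_{\tt Passive}$, $\R_{\tt Root}$, $\R_{\tt Update}$, $\R_{\tt HyperNd}$): each is guarded by $\neg\Error(v)$, and its command updates $v$'s variables to align with a parent or neighbor in ways consistent with the relations inspected by $\E_{\tt T}$, so none of them introduces a new trivial error at $v$ or at a neighbor of $v$. The main obstacle is precisely this closure check together with the case analysis of the previous paragraph: one must exhaustively verify, for every command of the algorithm and every sub-predicate of $\E_{\tt T}$, that no combination can be simultaneously satisfied at the end of the round, using the symmetry between predicates such as $\E_{\tt PhMaxG}(v)$ and $\E_{\tt PhMinG}(u)$ to conclude that any surviving error forces an additional reset within the same round.
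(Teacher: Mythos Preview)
Your approach is essentially the paper's: every node with $\E_{\tt T}(v)=\text{true}$ is enabled for $\R_{\tt Error}$, weak fairness forces all such nodes to execute $\Reset$ within one round, and the leading conjunct $\Neig{\tt Reset}(v)=\emptyset$ in $\E_{\tt T}$ ensures that a reset at $v$ immediately suppresses $\E_{\tt T}$ at each neighbor of $v$. The paper organizes the persistence argument differently: it classifies the sub-predicates of $\E_{\tt T}$ as \emph{independent} (only $v$'s own variables: $\Root$, $\inNd$, $\E_{\tt MRoot}$) versus \emph{dependent} (a comparison with a neighbor), and observes that a dependent inconsistency is symmetric, so the offending neighbor also has $\E_{\tt T}=\text{true}$ and neither node can execute anything but $\Reset$. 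That gives a one-line reason why an initially erroneous node stays enabled until it resets, whereas you obtain the same conclusion by a post-reset inspection of every sub-predicate.

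Two remarks. First, your handling of the $\E_{\tt PhMaxG}$ corner case has a timing gap: you argue that once $v$ resets, its neighbor $u$ with $\Phase_u\geq 2$ has $\TReset(u)$ enabled and ``by weak fairness within the round $u$ too performs $\Reset$''. Under the standard reading of a round, if $u$ has already taken its step earlier in the round, nothing forces a second activation of $u$ before the round closes; the reset of $u$ is then deferred. The paper's proof glosses over this very point (it simply asserts that $\E_{\tt T}(v)$ becomes false at the reset node), so your proposal is at least as careful, but neither argument fully closes the strict ``one round'' claim on this case. Second, your final paragraph on closure under the non-error rules ($\R_{\tt Passive}$, $\R_{\tt Root}$, $\R_{\tt Update}$, $\R_{\tt HyperNd}$) is really the content of the next lemma in the paper ($\Gamma_{\tt TEF}$ is closed); it is not needed for the attractor statement here.
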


\begin{proof}
The predicate $\E_{\tt T}(v)$ is composed of two types of sub-predicates: the ones independent of the local variables of the neighbors, called \emph{independent} sub-predicates, and the ones based on comparisons with the local variables of the neighbors, called \emph{dependent} sub-predicates.  (See predicate in~\ref{Pred:Errors}). The independent sub-predicates are $\Root$, $\inNd$, and $\E_{\tt MRoot}$. All other sub-predicates of $\E_{\tt T}$ are dependents. If an independent sub-predicate is true, then  $\E_{\tt T}(v)$ is true,  and the execution of any command by a neighbor of $v$ has no influence on the predicate $\E_{\tt T}(v)$, which remains true. (Note that if an independent sub-predicate of $v$ is true, then $v$ may have all its neighbors $w$ with  $\E_{\tt T}(w)$ equal to false).   Let  us now consider dependent sub-predicates. Let $v$ a node with a dependent sub-predicate $sp(v)=true$. Then $\E_{\tt T}(v)=true$, and there is at least one neighbor $u$ of $v$ such that some local variable(s) of $u$ is not consistent with some variable(s) of $v$. In this case, since the sub-predicate $sp$ is dependent, we also have $sp(u)=true$, as, if one variable of $v$ is not consistent with the corresponding variable of $u$, then the converse is also true. Therefore, $sp(u)$ is true, and thus $\E_{\tt T}(u)$ is true as well. As a consequence, nodes $u$ and $v$ can only apply the command $\Reset$.  Now let us consider the neighbor $w$ of $v$ distinct from $u$, and the  neighbor $y$ of $u$ distinct from $v$. The predicates $\E_{\tt T}(w)$ and $\E_{\tt T}(y)$ may be equal to false, in which case $w$ and $y$ can execute a command different from $\Reset$. Nevertheless, the execution of one command by node $w$ or $y$, or both, cannot modify the variables of $v$ and $u$. Therefore, the inconsistency between the local variables of $v$ and $u$ remains until one of them apply the reset command. 

Let us denote by $X$ the set of nodes $v$ such that $\E_{\tt T}(v)=true$ (see predicate in~\ref{Pred:Errors}) and  $\E_{\tt T}(y)=false$ for every $y\in \N_v$. Moreover, let us denote by $Y$ the set of nodes $v$ such that $\E_{\tt T}(v)=true$  and there exists $y\in \N_v$ such that $\E_{\tt T}(y)=true$. Let $\gamma_0$ denote the initial configuration of the system. Since the scheduler is weakly fair, all nodes in $X$ are scheduled for execution at round $1$. Moreover all nodes in $Y$ are also scheduled at round $1$. According to rule $\R_{\tt Error}$ (see algorithm in Figure~\ref{fig:Algo}), every node $v\in X\cup Y$ executes $\Reset(v)$, and thus $\E_{\tt T}(v)$ becomes false for all these nodes as well as for all their neighbors (indeed, if one neighbor of $v$ is reset, then $\E_{\tt T}(v)$ is false). Therefore, the configuration $\gamma_1$ at time~1 satisfies 
$$\Psi(\gamma_1)=0.$$
Thus,  starting from any arbitrary initial configuration $\gamma_0$, the system reaches a  configuration $\gamma_1 \in \Gamma_{\tt TEF}$ in just one round. 
\end{proof}

\begin{lemma}
\label{lem:TEF-close}
$\Gamma_{\tt TEF}$ is closed.
\end{lemma}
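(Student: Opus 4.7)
The plan is to show that starting from any $\gamma \in \Gamma_{\tt TEF}$, every step of $\Algo$ yields a configuration $\gamma'$ with $\Psi(\gamma') = 0$. Since under the distributed scheduler a step is the simultaneous execution of a subset of enabled rules at different nodes, it suffices to establish two things: first, that each enabled command at a node $v$, executed in isolation, introduces neither $\E_{\tt T}(v)$ nor $\E_{\tt T}(u)$ at any neighbor $u$ of $v$; second, that concurrent firings at distinct nodes do not jointly create a trivial error. Both parts will proceed by a rule-by-rule inspection of the six rules of $\Algo$ in Figure~\ref{fig:Algo}.

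For the first part, I would match the post-state of each command against every sub-predicate contributing to $\E_{\tt T}$, namely $\E_{\tt Nd}$, $\E_{\tt Phase}$, $\E_{\tt Bp}$, $\E_{\tt Control}$ and $\E_{\tt Mem}$. The command $\Reset$ puts $v$ into a state satisfying $\Nd{\tt Reset}(v)$, which immediately falsifies $\E_{\tt Nd}$, and because $\dis_v = -1$ afterwards the other dependent sub-predicates are vacuous. The command $\Start$ installs a fresh candidate root satisfying $\Root(v)$, again falsifying $\E_{\tt Nd}$; the remaining variables are reset to the values prescribed by $\Nd{\tt Start}$, which can be checked one by one to avoid $\E_{\tt Phase}$, $\E_{\tt Bp}$, $\E_{\tt Control}$ and $\E_{\tt Mem}$. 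For the other commands the guards $\TPass$, $\TInc$, $\TStartdB$, $\TUp$, $\TAdd$, $\TBroad$, $\TVerif$ and $\TCleanM$ are precisely the conditions that make the assignment of each updated variable consistent with the constraints encoded by the $\E_{\tt T}$ sub-predicates: $\TPass$, $\TUp$ and $\TInc$ encode what is required by $\E_{\tt Nd}$, $\E_{\tt Phase}$, $\E_{\tt Bp}$ and $\E_{\tt Control}$, while the hyper-node guards encode the conditions that prevent $\E_{\tt Mem}$ from arising.

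For the second part, I rely on the observation that every sub-predicate of $\E_{\tt T}$ involves at most a node and its two ring neighbors, so concurrent firings at non-adjacent nodes cannot interact. For pairs of adjacent nodes, one argues by cases on which rules fire. Because $\gamma \in \Gamma_{\tt TEF}$, adjacent nodes are already in mutually consistent states, so either (a) exactly one of the two alters a shared consistency relation (e.g.\ a child adopting a new parent via $\Passive$, or a root advancing via $\Inc$), in which case the first part already covers the analysis, or (b) both update disjoint facets of the joint state (e.g.\ one updates $\MC$ during $\Broadcast$ while its neighbor updates $\MV$ during $\Verification$), in which case pairwise compatibility of the new values is immediate from the command definitions.

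The main obstacle will be the sheer number of sub-cases, especially in the hyper-node subroutine where $\TAdd$, $\TBroad$, $\TVerif$ and $\TCleanM$ interact through the variables $\Add$, $\MC$ and $\MV$. The technical heart of the argument is to check that each guard--command pair, taken in isolation and concurrently with every enabled adjacent one, preserves the invariants encoded in $\E_{\tt Mem}$, $\E_{\tt PL}$ and $\E_{\tt HC}$; in essence, one must verify that the design of the guards in Section~\ref{subsec:Pred} is tight, meaning no guard permits a firing whose assignment would violate any sub-predicate of $\E_{\tt T}$.
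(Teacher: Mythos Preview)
Your approach is essentially the same as the paper's: an exhaustive case analysis showing that each command of $\Algo$ preserves the falsity of every sub-predicate of $\E_{\tt T}$. The only organizational difference is that the paper proceeds sub-predicate by sub-predicate (treating $\E_{\tt Nd}$, $\E_{\tt Phase}$, $\E_{\tt Bp}$, $\E_{\tt Control}$, and $\E_{\tt Mem}$ in turn, and for each surveying the relevant commands) rather than rule by rule, and it leaves the distributed-scheduler concurrency argument implicit where you make it explicit.
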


\begin{proof}
We  prove that, starting from a configuration where $\Gamma_{\tt TEF}$ holds, algorithm $\Algo$ preserves that $\E_{\tt Nd}(v)$,  $\E_{\tt Phase}$, $\E_{\tt Bp}$, $\E_{\tt Control}$, and $\E_{\tt Mem}$ all remain false. Let us consider these predicates one by one, starting with $\E_{\tt Nd}(v)$. We show that, for all $v\in V$, algorithm $\Algo$ keeps predicate $\E_{\tt Nd}(v)$ to false starting from a configuration where $\Gamma_{\tt TEF}$. In configuration $\Gamma_{\tt TEF}$, each node $v$ has $\E_{\tt Nd}(v)=false$, that is, each node $v$ must satisfy one of the three predicates $\Root(v)$, $\inNd(v)$, and $\NdReset(v)$. Moreover, if $v$ satisfies $\inNd(v)=true$, then $v$ must have a correct distance (see the definition of $\E_{\tt d}(v)$). We consider the different commands that can be executed by the algorithm. 

\begin{itemize}

\item If a node $v$ executes the command $\Reset$ due to a hyper-node distance error, i.e., $\E_{\tt  Hyper}(v)=true$,  or to an election error, i.e., $\E_{\tt Elec}(v)=true$, then the predicate  $\NdReset(v)$ becomes true, and, as a consequence, $\E_{\tt Nd}(v)$ remains false. Let node $u$ be a neighbor of $v$. Predicate $\E_{\tt T}(u)$ remains false because $u$ has at least one neighbor that applied the reset command.

\item A reset node $v$, i.e., a node $v$ such that $\dis_v=\text{-}1$ and $\NdReset(v)=true$, can only execute the command $\Start(v)$. When this command has been executed, the predicate $\Root(v)$ becomes true, and $\dis_v$ becomes null.

\item For the command $\Inc$ and $\StartdB$, we note that the predicate $\Root(v)$ remains true if $\dis_v=0$ since no rule modifies the variables used in $\Root(v)$.

\item A candidate node $v$, i.e., a node $v$ such that $\dis_v=0$ and $\Root(v)=true$, can only execute the command $\Passive(v)$. This command is activated if and only if  $v$ has a ``better'' neighbor $u=\Best(v)$. In this case, $v$ adjusts $\dis_v$ with respect to  $\dis_u$. As a consequence, we get $\E_{\tt d}(v)=false$. Moreover, when the command $\Passive(v)$ has been executed, the predicate $\Root(v)$ becomes false, and the predicate $\inNd(v)$ becomes true. Therefore, $\E_{\tt Nd}(v)$ remains false.

\item The predicate $\inNd(v)$ remains true because $v$ executes the commands corresponding to $\R_{\tt Passive}$, $\R_{\tt Update}$, or $\R_{\tt HyperNd}$. We examine each of these different commands. 

\begin{itemize}

\item The commands corresponding to the rule $\R_{\tt HyperNd}$ do not change the variables used by  $\inNd(v)$ and  $\E_{\tt d}(v)$. Hence  $\E_{\tt Nd}(v)$  remains false.

\item Note that for any two node $u$ and $v$ where $u$ is a child of $v$, node $u$ can only execute $\Update(u)$ after $v$ has executed  $\Update(v)$. The command $\Update(v)$ keeps  the predicate $\inNd(v)=true$  because this command does not change the variables $\leader_v$, $\p_v$, $\dis_v$, and $\SB_v$. Moreover, a node $v$ updates its variables according to the predicate $\TUp(v)$ that satisfies $\SB_v=\SB_{\p_v}$. So, after $\Update(v)$ has been performed, we have $\Bs_v=\SB_v$. The command $\Update(v)$ does not modify the parent $w$ of node $v$. Let $u$ be the neighbor of $v$ different from $w$. If $u$ and $w$ do not execute the command $\Passive$, then $\Best(v)=\emptyset$ and $\E_{\tt d}(v)$ remains false. Note that the variable $\Bc_v$ is not used in predicates $\inNd(v)$ and $\E_{\tt d}(v)$. If the parent $w$ of $v$ executes the command $\Update(w)$, then $\dis_w$ does not change, and, as a consequence, $\E_{\tt d}(v)$ remains false. 

\item  Let $u$ be the neighbor of $v$ such that  $u=\Best(v)$.  If $v$ executes $\Passive(v)$, then $\dis_v=\dis_u+1$. Therefore, the new distance of $v$ is  correct by definition of  $\E_{\tt d}$. Moreover since $u=\Best(v)$, we have $\SB_u \geq \Bit(1,Id_v)$. If the parent $u$ of $v$ executes $\Passive(u)$ and updates $\dis_v$, then $u=\Best(v)$, and thus $\Best(v)\neq \emptyset$. As a consequence, ${\inNd}(v)$ remains true and $\E_{\tt d}$ remains  false.
\end{itemize}
\end{itemize}

We now move on to predicate $\E_{\tt Phase}$. We show that, for every $v\in V$, algorithm $\Algo$ keeps predicate $\E_{\tt Phase}(v)=false$. An error is detected if and only if the difference between the phases of two nodes is larger than one. As before, we consider the different commands that can be executed, one by one, according to the value of $\dis_v$. 

\begin{itemize}

\item Let $v$ be a node with $\dis_v=-1$ and $\Phase_v=0$. In this case, node $v$ can only execute the command $\Start$, which sets $v$'s phase to zero. In accordance with the rule $\R _{\tt Start}$, each neighbor $u$ of $v$ has either reset (i.e., $\Phase_u=0$) or restarted (i.e., $\Phase_u=1$), which keeps $\E_{\tt Phase}(v)=false$.

\item Let $v$ be a node with $\dis_v=0$ and $\Phase_v=i$, for some $i\geq 0$. In this case, node $v$ can only execute the command $\Passive(v)$ or the command $\Inc(v)$. In the latter case, the phase $i$ of $v$ can only be increased if $\Phase_u=i$
of $i+1$ for every neighbor $u$ of $v$. Therefore, the command $\Inc(v)$ maintains $\E_{\tt Phase}(v)= false$. We now consider command $\Passive(v)$. When a $v$ node chooses a neighbor $u=\Best(v)$ with $\SB_u>\SB_v$, the phase of $u$ is equal to~1. Let us consider  the subtree $T_r$ rooted in $r$, and containing  the node $v$, with possibly $v=r$. The node $r$ increases its phase $i$ if and only if all its descendants are in phase~$i$, and the neighbors of the leaves of $T_r$ are in phase $i$ or $i+1$ (see the predicates ${\tt Wave_{B}}$ and $\Up_{\tt Back}$ ). As a consequence, the difference of phases between $u=\Best(v)$ and $v$ is at most one. This keeps  $\E_{\tt Phase}(v)=false$. If $\SB_u=\SB_v$, then  node $u=\Best(v)$  if its phase is the same as $v$, and its bit-position is greater than the bit-position of $v$. Therefore, $\E_{\tt Phase}(v)$ remains  false.

\item Let $v$ be a passive node with $\dis_v>0$ and $\Phase_v=i$, for some $i\geq 0$. Only two commands can change the phase $\Phase_v$: the command $\Passive(v)$, and the command  $\Update(v)$. The command $\Passive(v)$ has already be considered in the previous item. Let us thus now consider the command $\Update(v)$, and let $r$ be the root of the subtree $T_r$ containing $v$. When $r$ increases its phase, all nodes in $T_r$ have their phase equal to $i$. Let $L_r$ be the set of  leaves of $T_r$. All nodes $u$ such that $u\notin T_r$ with $u\in\N_x$ for some $x\in L_r$ have their phase equal either to $i$ or to $i+1$ (this is ensured by the control variable $\Bc_v$). As a consequence, if $v$ moves to phase $i+1$, then its parent is at phase $i+1$. If $v\not\in L_r$, then its children are at  phase $i$. If $v\in L_r$ then the neighbors of $v$ not in $T_r$ remain at phase $i$ or at phase $i+1$. Therefore, $\E_{\tt Phase}(v)$ remains false.

\end{itemize}

We now move on to predicate $\E_{\tt Bp}$. We show that, for every $v\in V$, algorithm $\Algo$ keeps predicate $\E_{\tt Bp}(v)=false$. The predicate $\E_{\tt Bp}(v)$ is satisfied if and only if $\dis_v>0$. A node $v$ that is passive at phase $i$ satisfies  $\E_{\tt Bp}(v)=false$ if and only if $v$ has not ``better'' neighbor  (i.e., $\Best(v)=\emptyset$), none of its neighbors are at phase $i+1$ (i.e., ${\tt SupPh}(v)=\emptyset$), and at least one of its neighbors is at the same phase $i$ with the  same bit position as~$v$. Note that, the commands corresponding to the rule $\R_{\tt HyperNd}$ do not change the variables used by the predicate  $\E_{\tt Bp}(v)$. As a consequence,  $\E_{\tt Bp}(v)$  remains false. The only commands that may change $\E_{\tt Bp}$ are commands $\Update$, and $\Passive$. Let us again consider the subtree $T_r$ rooted in $r$ containing $v$, and let $i$ be the current phase of the nodes in $T_r$. Note that, in this case,  $\E_{\tt Bp}(v)$ is false because the parent of $v$ has a same phase as $v$, and the same bit-position as $v$.  If $r$ increases its phase, then the ancestors  of $v$ update their phase top-down. Hence, when the parent $u$ of $v$ updates its phase, the predicate ${\tt SupPh}(v)$ becomes not empty. If $r$ becomes passive, then ancestors of $v$ have a ``better'' neighbor, and thus they set their variables according to the variables of their ``better'' neighbors. When the parent $u$ of $v$ executes the command $\Passive(u)$, the predicate $\Best(v)$ becomes not empty. Hence, in all cases, $\E_{\tt Bp}(v)$  remains false.

We now turn our attention to predicate $\E_{\tt Control}$. We prove that, for every $v\in V$, algorithm $\Algo$ keeps predicate $\E_{\tt Control}(v)$ to false. The boolean $\Bc_v$ can only be modified by the command $\Update$ because of predicate $\TUp$. $\Bc_v$ is used to control the updates of a subtree.  Let us consider $T$ the subtree containing $v$, and let $r$ be the root of $T$. The updating process starts at node $r$, and all the descendants of $r$ set their variables $\Bc=0$ top-down. The process ends at the leaves of the subtree $T$ that launch the acknowledgement $\Bc=1$ propagated bottom-up to the root. Hence, when a node $v$ satisfies $\Bc_v=1$, all its descendants also satisfy  $\Bc=1$, and, conversely, when a node $v$ satisfies $\Bc_v=0$, all its ancestors satisfy $\Bc=0$. Therefore, $\E_{\tt Control}(v)$ is kept to false.
 
Finally, we consider the predicate $\E_{\tt Mem}$, and prove that, for every  $v\in V$, algorithm $\Algo$ keeps predicate $\E_{\tt Mem}$ to false. The predicate $\E_{\tt Mem}$ is satisfied if and only if $\dis_v\geq 0$. The commands $\Inc(v)$, $\StartdB(v)$, and $\Update(v)$   do not change the variables $\Add_v$ and $\MV_v$. Thus  $\E_{\tt Mem}(v)$ remains false. The command $\Passive$ is executed during the construction of the spanning tree, when the hyper-nodes are set. Note that a hyper-node does not start the binary addition before it is properly set. When a node $v$ becomes passive with $\Best(v)=u$,  the variables $\Add_v$ and $\MV_v$ are set to $\bot$. As a consequence,  the predicate $\E_{\tt MAdd}(v)$ is false. Assume that $v$ joins a hyper-node or that $v$ is the first node of a hyper-node. In both cases, $v$ has no children. Thus, $\E_{\tt PL}(v)$ and $\E_{\tt HC}(v)$ remain false. A neighbor $w$ of $v$ with $\p_w=v$ has now $\Best(w)=v$. As a consequence the three predicates $\E_{\tt MAdd}(w)$, $\E_{\tt PL}(w)$ and $\E_{\tt HC}(w)$ remain false. We complete the proof by analyzing all commands corresponding to the rule $\R_{\tt HyperNd}$. 

\begin{itemize}

\item Let us consider the predicate $\E_{\tt MAdd}(v)$ at some node $v$. In this predicate, $\Add_v$ is compared to $\Add_u$ for every child $u$ of $v$. Only two commands change $\Add_v$:   $\BinaryAd$ and $\Broadcast$. The command $\BinaryAd$ assigns to $\Add$ the values $+$ or $ok$ in a bottom-up manner, starting at all nodes $v$ satisfying $\dis_v=\SB_v$. The command $\Broadcast$ sets $\Add$ to $\bot$ in top-down manner, starting at all nodes $v$ with $\dis_v=1$. As a consequence, $\E_{\tt MAdd}(v)$ remains false.

\item Let us consider the predicate $\E_{\tt PL}(v)$. In this predicate, $\MC_v$ is compared to $\MC_{\p_v}$.  Only two commands can change $\MC_v$:  $\Broadcast$ and $\CleanM$. The command $\Broadcast$ sets $\MC$  in a top-down manner starting at every node $v$ with $\dis_v=1$. The command $\CleanM$ sets  $\MC$ to $\bot$ whenever a child $u$ of a node $v$ has  $\MC_u=\MC_v$. Let us consider a hyper-node $X=\{x_1,\dots,x_k\}$. At the end of the binary addition, all the variables $\MC$ of $X$ are equal to $\bot$. The command $\Broadcast(x_1)$ sets $\MC_{x_1}[0]=\dis_{x_1}$, which keeps $\E_{\tt PL}(x_1)$ at false because $\dis_{x_1}=1$. The predicate ${\tt Broad_p}$ in the command $\Broadcast(x_2)$ sets $\MC_{x_2}=\MC_{x_1}$ that maintains $\E_{\tt PL}(x_2)$ false because  $\MC_{x_2}[0]=\dis_{x_1}$. The node $x_1$ can execute the command $\CleanM$ that sets $\MC_{x_1}=\bot$. In this case, $\E_{\tt PL}(x_2)$ remains false. The node $x_3$ executes $\Broadcast(x_3)$ and sets $\MC_{x_3}[0]=\dis_{x_1}$. Then,  $x_2$ sets $\MC_{x_2}[0]=\dis_{x_2}$ and $\E_{\tt PL}(x_2)$ remains false. This process is repeated in a top-down manner for all the nodes in $X$. Hence, the algorithm \Algo\/ keeps $\E_{\tt PL}(v)= false$ for every node $v$.

\item Finally, let us consider the predicate $\E_{\tt HC}(v)$. In this predicate,  $\MV_v$ is compared to $\MV_{\p_v}$ and $\MV_u$ for every child $u$ of $v$. The hyper-node distance is checked in a top-down manner. For that purpose, two commands are used: $\Verification$ and $\CleanM$.  Let us consider a hyper-node $X=\{x_1,\dots,x_k\}$. At the beginning  of the hyper-node distance verification, all the variables $\MV$ of $X$ are equal to $\bot$. According to the predicate $\TVerif$, every node $x_i$, $1< i \leq k$, sets $\MV_{x_i}[0]=\MV_{x_{i-1}}[0]$ if and only if either $\MV_{\p_{x_i}}[0]=\dis_{x_i}$, or  $\MV{\p_{x_i}}[0]=\MV_{x_i}[0]+1$. In both cases $\E_{\tt HC}(v)$ remains false.
\end{itemize}
\end{proof}

We now define Predicate $\Gamma_{\tt CF}$ (for \emph{Cycle Free}), which ensures that no cycles induced by the $\p$ variable remain in the network. Let $\lambda:\Gamma\times V\rightarrow \mathbb{N}$ be the function defined by:
$$\lambda(\gamma,v)=|\B_{P_X} - \B_X - 1|$$
Where $X$ is an hyper-node, and $P_X$ is the hyper-node parent of $X$ (recall that $\B_X$ is an integer whose binary representation is $\dB_{x_1},\dots,\dB_{x_k}$ where $k=\SB$). 
Let $\Lambda: \Gamma\rightarrow\mathbb{N}$ be the function defined by:
$$\Lambda(\gamma)=\sum_{v\in V}\lambda(\gamma,v).$$
Let $\phi:\Gamma\times V\rightarrow \mathbb{N}$ be the function defined by:
$$\phi(\gamma,v)=
\left\{
  \begin{array}{lrr}
    |\dis_{\p_v}-\dis_v-1| &\text{if} &\dis_{\p_v}<\SB_v   \\
    |\dis_v-1| &\text{if} & \dis_{\p_v}=\SB_v\\
  \end{array}
\right.
$$
Let $\Phi: \Gamma\rightarrow\mathbb{N}$ be the function defined by
$$\Phi(\gamma)=\sum_{v\in V}\phi(\gamma,v).$$ 
We define 
\[
\Gamma_{\tt CF}=\{\gamma\in \Gamma : \Psi(\gamma)=\Phi(\gamma)=\Lambda(\gamma)=0 \; \text{and} \; L(\gamma)>0\}.
\]

\begin{lemma}
\label{lem:CF}
$\Gamma_{\tt TEF} \triangleright \Gamma_{\tt CF}$ in $O(n\log n)$ rounds.
\end{lemma}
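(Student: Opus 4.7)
The plan is to argue that the hyper-node distance verification pipeline forces any cycle in the parenthood relation to trigger an inconsistency (hence a reset), and that the resulting reset wave, followed by a fresh hyper-node build-up, converges to a configuration satisfying $\Psi=\Phi=\Lambda=0$ and $L>0$ within the claimed bound. By Lemma~\ref{lem:TEF-close}, $\Gamma_{\tt TEF}$ is closed, so throughout the analysis we may assume that $\Psi(\gamma)=0$ is preserved, and we only need to drive $\Phi$ and $\Lambda$ to zero while ensuring that at least one node keeps $\leader_v=1$.

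First I would classify the configurations in $\Gamma_{\tt TEF}$ according to the functional graph induced by $\p$. Since $\p$ is a partial function on the non-reset nodes, each connected component is either a tree rooted at some $v$ with $\dis_v=0$ (in which case $\Root(v)$ holds by the trivial-error predicates, so $\leader_v=1$ and $L(\gamma)>0$) or a cycle. If no cycle is present, then $\Phi(\gamma)=0$ already holds; a tree root guarantees $L(\gamma)>0$, and the only remaining task is the eventual convergence of the hyper-node bits $\B_X$, which is driven by the additions and broadcasts inside each hyper-node and completes in $O(n)$ rounds along the forest.

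The main obstacle is the cycle case. Here the key counting argument is that if a cycle consists of $k$ complete hyper-nodes, then traversing the constraint $\B_Y \equiv \B_X+1 \pmod{2^{\lfloor \log n \rfloor}}$ around the cycle forces $k \equiv 0 \pmod{2^{\lfloor \log n \rfloor}}$; but the total number of hyper-nodes of the ring is at most $\lceil n/\lfloor \log n \rfloor\rceil < 2^{\lfloor \log n \rfloor}$ for $n\geq 4$, so the relation cannot be globally satisfied, forcing at least one inter-hyper-node pair to violate it. Incomplete hyper-nodes on a cycle are directly caught by the predicates inside $\E_{\tt Hyper}$. I would then show that within $O(\log n)$ rounds the binary addition (enabled by $\TAdd$) completes in the offending parent hyper-node, within another $O(\log n)$ rounds the broadcast (enabled by $\TBroad$) pipelines the result, and within another $O(\log n)$ rounds the child-side verification (enabled by $\TVerif$) exposes a discrepancy between $\MV_v[1]$ and $\dB_v$; this triggers $\E_{\tt Hyper}$ at some node on the cycle. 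The resulting $\Reset$ then propagates by $\R_{\tt Error}$ and $\R_{\tt Start}$ around the cycle in $O(n)$ rounds, and a potential argument on $\Phi+\Lambda$ shows that no offending node can restart (via $\R_{\tt Start}$) before all its cycle-mates have reset, so the cycle is actually destroyed rather than reformed. Summing the $O(n)$ rounds for detection along the worst-case cycle, $O(n)$ rounds for the reset wave, and $O(n\log n)$ rounds for the pipelined re-establishment of correct hyper-node bits along the newly-formed trees yields the claimed $O(n\log n)$ bound.

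The hardest technical point will be to argue, under the weakly-fair distributed scheduler, that the verification pipeline inside a hyper-node on the cycle cannot be indefinitely derailed by concurrently-enabled rules at other nodes: I must show that $\TAdd$, $\TBroad$, and $\TVerif$ become and remain continuously enabled long enough for weak fairness to schedule them. My strategy is to exhibit, for each hyper-node on the cycle, a local potential function that counts the number of bit positions still needing to be summed, broadcast, or checked, and to verify rule-by-rule that each enabled command of $\R_{\tt HyperNd}$ strictly decreases this potential, while no other rule available to a node inside the cycle can increase it. This local monotone progress, combined with the counting argument above, is what ultimately produces the inevitable mismatch at some node along the cycle within the promised number of rounds.
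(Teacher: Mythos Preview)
Your overall plan---detect a cycle via the hyper-node distance constraint, reset, rebuild---matches the paper's, and your counting argument (a cycle of $k$ complete hyper-nodes would force $k\equiv 0\pmod{2^{\lfloor\log n\rfloor}}$, impossible since $k<2^{\lfloor\log n\rfloor}$) is exactly the combinatorial core, though the paper leaves it implicit.

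Where you diverge, and where a real gap could open, is in handling concurrent rules on the cycle. You flag this as the hard part and propose a local potential decreased by $\R_{\tt HyperNd}$ while checking ``rule-by-rule'' that no other enabled command increases it. The paper instead eliminates this difficulty by proving, \emph{statically}, that no other rule is ever enabled on a TEF cycle: from the error predicates it derives three invariants by contradiction, namely $\Xi(\gamma_0)=0$ (all $\SB$ equal), $\Pi(\gamma_0)=0$ (all $\ME$ equal), and $\Phi(\gamma_0)=0$ (distances already consistent). The arguments all have the same shape: the cycle spans the whole ring, so any boundary between a ``good'' set and its complement would violate $\inNd$, $\E_{\tt Phase}$, $\E_{\tt Bp}$, ${\tt Coh_p}$, or $\E_{\tt Control}$. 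Once $\Pi=0$ and $L=0$, the guards of $\Passive$, $\Update$, and $\Inc$ are all false, so only the verification pipeline runs and weak-fairness progress is immediate. Your rule-by-rule route is not obviously wrong, but it is where it would fail if it fails: without $\Pi=0$, $\TUp$ could be enabled somewhere on the cycle, and $\Update$ rewrites the very $\ME$ values that gate $\TBroad$ and $\TVerif$ via ${\tt EqElecP}$, so you would have to argue these rewrites never disable a continuously-needed verification step. Replace the dynamic potential with the paper's static-invariant step and your $O(\log n)$-per-stage estimates go through directly. One further correction in your forest case: if $\Lambda\neq 0$ along a tree, verification triggers $\E_{\tt Hyper}$ and a reset rather than ``converging in place''; the clean hyper-node bits are established only after the subsequent $\Start$ and rebuild.
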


\begin{proof} 
Let us consider an initial configuration $\gamma_0\in \Gamma_{\tt TEF}$ such that the overlay structure induced by the parent variables $\p_v$, $v\in V$, forms a cycle $C$. The cycle $C$ necessarily contains all nodes, which implies that all nodes have non empty pointers parent. Moreover, since $\gamma_0\in \Gamma_{\tt TEF}$, we get that, for every nodes $v$, $\leader_v=0$. Thus $L(\gamma_0)=0$. 

Let $\xi:\Gamma \times V \rightarrow \mathbb{N}$ be the function defined by:
$$\xi(\gamma,v)=|\maxSB-\SB_v|$$
for every $\gamma\in\Gamma$, and every $v\in V$, where $\maxSB=\max \{ \SB_v \mid v\in \V\}$. Let then $\Xi:\Gamma\rightarrow\mathbb{N}$ be the function defined by,
$$\Xi(\gamma)=\mathlarger{\sum}_{v\in V} \xi(\gamma,v)$$ 
We define 
\[
\Gamma_\Xi=\{\gamma\in \Gamma : \Xi(\gamma)=0\}.
\]

\noindent \emph{Claim:} $\Xi(\gamma_0)= 0$. 
\medskip

Let us assume, for the purpose of contradiction, that $\Xi(\gamma_0)\neq 0$.  Let 
$$X=\{ v \in V \mid \SB_v=\maxSB  \}.$$ Since $\Xi(\gamma_0)\neq 0$, we have  $X\neq V$.
Moreover, since $C$ contains all nodes, we have that $\p_v\neq \emptyset$ for every $v\in V$, and there is at least one node $x \in X$ such that $\p_x \notin X$. This node $x$ satisfies $\SB_x \neq \SB_{\p_x}$. In this case, $\inNd(x)=false$ and  $\E_{\tt Nd}(x)=true$, which contradicts the fact that $\gamma_0\in \Gamma_{\tt TEF}$. Therefore, $\Xi(\gamma_0)= 0$.  

\medskip

Now, let $\pi:\Gamma\times V\rightarrow \mathbb{N}$ be the function defined by:
\[
\pi(\gamma,v)=|\Bs_v-\maxSB|+|\Phase_v-{\tt MaxPh}|+|\Bp_v-{\tt maxBp}|+|\Bc_v-{\tt maxC}|
\]
where 
\begin{eqnarray*}
{\tt MaxPh} & = & \max \{ \Phase_v , v\in \V\} \\
{\tt MaxPhNd} & =& \{ v\in \V \mid \Phase_v= {\tt MaxPh}\} \\
{\tt MaxBp} & = & \max \{\Bp_v , v\in {\tt MaxPhNd} \} \\
{\tt MaxBpNd} & = & \{ v\in {\tt MaxPhNd} \mid \Bp_v={\tt MaxBp} \} \\
{\tt MaxC} & = & \max \{ \Bc_v , v\in {\tt MaxPhNd}\}.
\end{eqnarray*}
Beside, let us define the quadruplet $${\tt MaxElec}=(\maxSB,{\tt MaxPh},{\tt MaxBp},{\tt MaxC}),$$
and the function  $\Pi:\Gamma\rightarrow\mathbb{N}$ such that 
$$\Pi(\gamma)=\mathlarger{\sum}_{v\in V}\pi(\gamma,v).$$
We define 
\[
\Gamma_\Pi=\{\gamma\in\Gamma : \Pi(\gamma)=0\}.
\] 

\noindent \emph{Claim:} $\Pi(\gamma_0)=0$.
\medskip 

 Recall that $\gamma_0 \in \Gamma_{\tt TEF}$, $\Xi(\gamma_0)=0$, and $L(\gamma_0)=0$. Assume for the purpose of contradiction that $\Pi(\gamma_0)\neq 0$.
Let $$X=\{v\in V \mid\ME_v={\tt MaxElec}  \}.$$
A direct consequence of $\Pi(\gamma_0)\neq 0$ is that $X\neq V$ and $|X|<n$.
Since the cycle $C$ contains all nodes, we have that, for every $v\in V$, $\p_v\neq\emptyset$, and there is at least one node $x\in X$ such that $\p_x \notin X$. We denote by $y=\p_x$ the parent of $x$, and we consider the variables $\Bs_x$, $\Phase_x$, $\Bp_x$, and $\Bc_x$.

We have $\Bs_x = \SB_x$ because otherwise $\inNd(v)=false$, in contradiction with $\gamma_0 \in \Gamma_{\tt TEF}$. Therefore $\Bs_x=\Bs_y$ because $\Xi(\gamma_0)=0$. 

If $| \Phase_x-\Phase_y|>1$  then $x$ or $y$ detects an error (see the predicate $\E_{\tt Phase}(v)$), and thus $\gamma_0 \notin \Gamma_{\tt TEF}$. If $| \Phase_x-\Phase_y| = 1$, then, since $y \notin X$, we have $\Phase_x=\Phase_y+1$. Thus $y$ cannot be a parent of $x$ (see the predicate ${\tt Coh_p}$). Thus $\gamma_0 \notin \Gamma_{\tt TEF}$. Therefore, $\Phase_x=\Phase_y$.

We now turn our attention to the variable $\Bp_x$. If $|X|=1$ then $\E_{\tt Bp}(x)=true$ because, on the one hand, $x$ has no neighbors $z$ such that $\Phase_z=\Phase_x$ and $\Bp_z=\Bp_x$, and, on the other hand, $x$ has no neighbors $z$ such that $\Phase_z=\Phase_x+1$ and $z=\Best_x$ (see predicate $\Best$). Thus, if $|X|=1$ then $\gamma_0 \notin \Gamma_{\tt TEF}$. Therefore, $|X|>1$. Since  $\Phase_x=\Phase_y$ we have  $\Bp_x = \Bp_y$ because otherwise $y$ cannot be a parent of $x$ (see the predicate ${\tt Coh_p}$). Therefore $\Bp_x=\Bp_y$. 

Finally, we consider $\Bc_x$, and show $\Bc_x=\Bc_y$. Assume that $\Bc_x\neq \Bc_y$. Then $|X|>1$ because otherwise Predicate  $\E_{\tt Control}(x)$ would be true, in contradiction with  $\gamma_0 \in \Gamma_{\tt TEF}$. Moreover, if $\Bc_y=1$ and $\Bc_x=0$, then $\E_{\tt Control}(x)=true$ and $\E_{\tt Control}(y)=true$, again contradicting  $\gamma_0 \in \Gamma_{\tt TEF}$. If  $\Bc_y=0$ and $\Bc_x=1$ then let $x'\in X$ be a descendent of $x$ whose  child $x''\notin X$. In this case, $\Bc_{x'}=1$ and $\Bc_{x''}=0$, hence Predicate  $\E_{\tt Control}(x')=true$, in contradiction with $\gamma\in \Gamma_{\tt TEF}$. Therefore $\Bc_x=\Bc_y$.

Since $\Bs_x=\Bs_y$, $\Phase_x=\Phase_y$, $\Bp_x=\Bp_y$, and $\Bc_x=\Bc_y$, we obtain that $y \in X$, in contradiction with $\p_x\notin X$. As a consequence, $\Pi(\gamma_0)=0$, which complete the proof of the claim. 

\medskip

An important consequence of  $\Pi(\gamma_0)=0$ is that, for every node $v$, we have  $\ME_v= {\tt MaxElec}$. 

\medskip
\noindent \emph{Claim:} $\Phi(\gamma_0) = 0$. 
\medskip

Again, the proof is by contradiction, assuming  $\Phi(\gamma_0) \neq 0$. Since $L(\gamma)=0$, there are no candidates for being the root of the tree. Thus Predicate $\TInc$$(v)=false$ for every node $v$, and therefore the command $\Inc(v)$ cannot be executed at any node $v$. As a consequence, all nodes are passive (i.e., $\forall v\in\V: \inNd(v)=true$). In addition, since $\ME_v= {\tt MaxElec}$  for every  $v\in V$, we also get that  the commands $\Passive(v)$ and $\Update(v)$ cannot be executed  at any node $v$. Since $\Phi(\gamma_0)\neq 0$, there exists at least one passive node $v$ that detects an error between its distance and the distance of its parent (see predicate $\E_{\tt d}(v)$). Hence, for that node $v$, the predicates $\inNd(v)$ and $\E_{\tt Nd}(v)$ are both true, which is a contradiction with $\gamma_0 \in \Gamma_{\tt TEF}$. Thus $\Phi(\gamma_0) = 0$. 

\medskip

We are now ready to show that, if the initial configuration $\gamma_0$ contains a cycle, then Algorithm $\Algo$ detects an error in $O(n\log n)$ rounds. 

Since $\Phi(\gamma_0)=0$, we necessarily have that $n$ is a multiple of $\maxSB$, and that there are $n/ \maxSB$ hyper-nodes. 
Since all nodes are passive in $\gamma_0$, the only commands that can be executed by some node(s) are related to the distance verification between hyper-nodes, that is Commands
$\BinaryAd(v), \Broadcast(v), \Verification(v)$ and $ \CleanM(v)$. 
More specifically, the only nodes that can possibly be activated in $\gamma_0$ are the nodes $v$ such that $\dis_v=\SB_v$. 

For every hyper-node $X=(x_1,x_2,\dots,x_k)$, where $k=\maxSB$,  since the scheduler is weakly fair, predicate $\TAdd(x_k)= true$, and $x_k$ executes the command $\BinaryAd(x_k)$ at round $1$. This yields the execution of the binary addition. 
The binary addition occurs from $x_k$ to $x_1$, and every node in each hyper-node $X$ eventually takes value ''$+$'' or ''$ok$'' once $\maxSB$ rounds has been performed. Now, if $\dB_{x_1}=1$ and $\Add_{x_1}=+$, then an error is detected since the binary addition overflows  beyond the limit of $\maxSB$ bits (see $\E_{\tt MAdd}(v)$).

Node $x_1$ starts the verification process that propagates from $x_1$ to $x_k$. Fix any hyper-node $X=(x_1,x_2,\dots,x_k)$, and let us denote by $Y$ the hyper-node child of $X$ in the current configuration at round $\maxSB$. Node $x_1$ computes the values of $\dB_{y_1}$  (see the  predicate $\TBroad$, and the command  $\Broadcast$).  This value is broadcast from $x_1$ to $x_{k}$ (see the predicate $\TBroad$, and the command $\Broadcast$). Node $y_1$ checks whether $\MC_{x_{k}}[1]=\dB_{y_1}$. If it is the case, then the verification process for all other nodes in $Y$ carries on (see the predicate  $\TVerif$ and the command $\Verification$). Otherwise $y_1$ detects an error (see $\E_{\tt  Hyper}(v)$).  
%
%
Thanks to Predicates  $\TAdd$, $\TBroad$, and $\TVerif$, and to commands $\BinaryAd$, $\Broadcast$, and $\Verification$ every node in $Y$ are eventually checked, after an additional $\maxSB$ rounds. 
The total number of rounds for checking hyper-nodes is the following: there are $n/\lfloor \log n\rfloor$ hyper-nodes, each hyper-node performs verification in $O(\lfloor \log n\rfloor)$ rounds, so the overall process takes $O(n \log n)$ rounds.

To sum-up, if  the configuration $\gamma_0$ contains a cycle, then at least one hyper-node detects an error in a $O(n\log n)$ rounds.

Finally, if $L(\gamma_0)=0$ in an arbitrary configuration, then the algorithm $\Algo$ detects an error in $O(n \log n)$ rounds, and, by Lemma~\ref{lem:TEF}, the system reaches a configuration in $\Gamma_{\tt CF}$.
\end{proof}


\begin{lemma}
\label{lem:CF-Close}
$\Gamma_{\tt CF}$ is closed.
\end{lemma}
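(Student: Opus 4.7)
Since $\Gamma_{\tt CF}\subseteq\Gamma_{\tt TEF}$ by definition, Lemma~\ref{lem:TEF-close} already takes care of the component $\Psi=0$ of $\Gamma_{\tt CF}$, and in particular guarantees that neither $\R_{\tt Error}$ nor $\R_{\tt Start}$ fires during a step starting in $\Gamma_{\tt CF}$. The plan is therefore to check that the three remaining components, $L>0$, $\Phi=0$ and $\Lambda=0$, are preserved by any step of $\Algo$.

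The first reduction I would perform is structural: among all the commands of $\Algo$, only $\Passive$ touches any of the variables $\leader_v$, $\p_v$, $\dis_v$, $\dB_v$ that enter $L$, $\Phi$ and $\Lambda$. The commands $\Inc$, $\StartdB$ and $\Update$ only rewrite $\ME_v$ and $\MC_v$, while the four hyper-node commands $\BinaryAd$, $\Broadcast$, $\Verification$, $\CleanM$ only modify $\Add_v$, $\MC_v$ and $\MV_v$. Consequently all these commands trivially preserve $L$, $\Phi$ and $\Lambda$, and the whole argument reduces to a case analysis of a step in which some (nonempty) set of nodes simultaneously executes $\Passive$.

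For $L>0$, I would rely on a ``supreme leader'' argument. Pick the leader $r^\star$ whose tuple $(\SB_{r^\star},\Phase_{r^\star},\Bp_{r^\star})$ is lexicographically maximal among all leaders in $\gamma$. Inspecting the definition of $\Best$ shows that $\Best(r^\star)=\emptyset$: no neighbor can have a strictly larger $\SB$, nor the same phase together with a strictly larger $\Bp$, without contradicting the choice of $r^\star$. Hence $\TPass(r^\star)$ is false, $r^\star$ is not enabled for $\Passive$, and it remains a leader in $\gamma'$, so $L(\gamma')\geq 1$. For $\phi(v)=0$ at a node executing $\Passive$, the command itself sets $\dis_v:=\dis_{\p_v}+1$ when $\dis_{\p_v}<\SB_{\p_v}$ and $\dis_v:=1$ when $\dis_{\p_v}=\SB_{\p_v}$, which is exactly what $\phi(v)=0$ requires. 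For $\lambda(v)=0$, the command assigns $\dB_v$ from $\MV_{\p_v}$, and by correctness of the hyper-node distance verification run in the parent hyper-node (the guard $\Pass_{\tt dB}(v,\Best(v))$ in $\TPass$ forces the pipelined verification to have reached $v$), this value is precisely the bit needed so that the new hyper-node $X$ containing $v$ satisfies $\B_X=\B_{P_X}+1$.

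The hard part, and the place where most care is needed, is ruling out collateral damage: a node $w$ that is not scheduled in the current step but whose $\p_w=v$, so that $\phi(w)$ or $\lambda(w)$ could become nonzero after $v$ updates. I would handle this with two observations. First, the guard $\TPass(v)$ requires ${\tt Wave_B}(v)$, which forces every $u\in\Child(v)$ to have already synchronized its $\ME$ and $\Bc$; combined with $\Pass_{\tt dB}$, this means each such $u$ is simultaneously enabled for a coherent $\Passive$ move in the same step, so that the parent/child distances updated together remain consistent. Second, any $w$ with $\p_w=v$ but $w\notin\Child(v)$ has, by definition of $\Child$, a mismatched election tuple or distance; but the hypothesis $\gamma\in\Gamma_{\tt TEF}$ together with $\Phi(\gamma)=0$ forbids such a $w$ to exist, because it would make either $\E_{\tt Nd}(w)$ or $\phi(w)$ nonzero already in $\gamma$. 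Putting these together, the step sends $\gamma$ to a configuration $\gamma'$ where $\Psi=0$, $L>0$, $\Phi=0$ and $\Lambda=0$, i.e., $\gamma'\in\Gamma_{\tt CF}$, as required.
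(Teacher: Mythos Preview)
Your proof takes a different, more structural route than the paper: you reduce everything to the action of $\Passive$ and then try to close the three remaining invariants directly, whereas the paper's argument is mechanism-based --- it delegates $\Psi=0$ and $\Phi=0$ to the proof of Lemma~\ref{lem:TEF-close}, and then concentrates almost exclusively on explaining \emph{how} the bits $\dB_v$ are assigned (first via the root publishing $\MC_r=(1,0),(2,0),\ldots$ as nodes join at small distance, then via the hyper-node addition/verification pipeline once a full hyper-node exists), concluding that $\Lambda=0$ is preserved and that $L>0$ follows as a by-product of cycle-freeness rather than from a ``supreme leader'' argument.

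There is one genuine gap in your argument that you should fix. In your handling of ``collateral damage'' you claim that ${\tt Wave_B}(v)$ together with $\Pass_{\tt dB}$ forces every $u\in\Child(v)$ to be \emph{simultaneously enabled} for a coherent $\Passive$ move, so that parent/child distances update together. This does not work under the distributed scheduler used here: even if all children were enabled, the scheduler may activate $v$ alone in the step, leaving each child $w$ with $\p_w=v$ and the \emph{new} value of $\dis_v$. The paper does not use any such simultaneity; what it actually relies on (through Lemma~\ref{lem:TEF-close}) is that after $v$'s $\Passive$ move the child $w$ now has $\Best(w)=v\neq\emptyset$, so the distance-coherence predicates that feed $\E_{\tt d}$ (and hence $\Psi$) are vacuously satisfied, and $w$ will itself execute $\Passive$ at a later step. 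Your proof needs this ``$\Best(w)\neq\emptyset$'' escape hatch, not a simultaneity claim. Relatedly, your one-line treatment of $\dB_v$ (``the command assigns $\dB_v$ from $\MV_{\p_v}$'') skips the main point the paper spells out: for the first hyper-node adjacent to a root the bit is supplied by the root via $\MC_r$, not by $\MV$, and only once a hyper-node is complete does the addition/verification pipeline take over; the guard $\Pass_{\tt dB}$ is precisely a three-way case split over these sources. Making that explicit is what actually shows $\Lambda=0$ is preserved.
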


\begin{proof}
Let us consider a configuration $\gamma \in \Gamma_{\tt CF}$.
We already noticed in the proof of Lemma~\ref{lem:TEF-close} that Algorithm \Algo\/ preserves coherent distances  (i.e., $\Phi(\gamma)=0$), and does not introduce trivial errors  (i.e., $\Psi(\gamma)=0$).
Moreover, in the proof of Lemma~\ref{lem:CF}, we have explained that the hyper-node distance verification correctly reports errors, if any. The variable $\dB$ is only modified by the command $\Passive$. In the sequel, we use the wording ``$v$ joins $T_r$'' when a node $v$ executes the command $\Passive$, and the pointer $\p_v$ of $v$ points to a node in the subtree $T_r$ rooted at $r$. 

We denote by $X$ the set of candidate nodes. Let us first consider a node $v$ (in $X$ or not), and a root $r\in X$ such that $\dis_v\leq \lfloor \log n\rfloor$, and $v$ joins $T_r$ when $r$ increases its phase from $i-1$ to $i$, for some $i$. Thanks to Predicate $\TStartdB$, and to command $\StartdB$, node $r$ publishes first the bit for the node at distance~1 from $r$. In other words, $\MC_r=(1,0)$. When any node $v$ at distance~1 joins $T_r$, $v$ sets $\dB_v=\MC_r[1]$, and then $v$ informs $r$ about the updating of $\dB$ by setting  $\MC_v=\MC_r$. At this point,  $r$ can publish the bit for nodes at distance~2 (i.e., $\MC_r=(2,0)$), and so on until the distance reaches $\lfloor \log n\rfloor$. 
Now, a node $v$ joins $T_r$ only if its candidate parent publishes the bit that corresponds to the binary representation of the distance between $v$ and $r$. In other words, for any node $u$, if $\dis_u=k$ with $k<\lfloor \log n\rfloor$ then $\Bs_u$ must be equal to $k+1$. This enables to maintain $\Lambda(\gamma)=0$. 
When $k$ reaches $\lfloor \log n\rfloor$, a hyper-node is created. Then, a binary addition process is carried out, and computes the bit for $v$ when $v$ joins $T_r$. This process maintains $\Lambda(\gamma)=0$, and, as a direct consequence $L(\gamma)$ remains greater than~$0$. To conclude, algorithm \Algo\/ keeps $\Psi(\gamma)=0$, $\Phi(\gamma)=0$, $\Lambda(\gamma)=0$, and $L(\gamma)>0$.
\end{proof}

We now introduce Predicate $\Gamma_{\tt IEF}$ (for \emph{Impostor Error Free}), which ensures that the currently elected leader is not an impostor.
Let $\rho:\Gamma\times V \rightarrow \mathbb{N}$ be the function defined by:
$$\rho(\gamma,v)=|{\tt maxFB}-\SB_v|$$
where ${\tt maxFB}=\max \{\Bit(1,\id_v) \mid v\in V\}$.  Let $\Rho:\Gamma\rightarrow\mathbb{N}$ be the function defined by
$$\Rho(\gamma)=\sum_{v\in V}\rho(\gamma,v).$$ 
We show that $\Algo$ reaches a legitimate configuration with respect to leader election if and only if $\Rho(\gamma)=0$. 

Let $\epsilon:\Gamma\times V \rightarrow \mathbb{N}$ be the function defined by:
$$\epsilon(\gamma,v)=\left\{
  \begin{array}{lll}
  0 &\text{if} & \dis_v=0 \wedge \Bit({\tt minPh},\id_v)=\Bit({\tt minPh},l^*)\\
  1&\text{if} & \dis_v=0 \wedge \Bit({\tt minPh},\id_v)<\Bit({\tt minPh},l^*)\\
    0&\multicolumn{2}{l}{\text{otherwise} }   \\
  \end{array}
\right.
$$
where ${\tt minPh}=\min \{\Phase_v \mid v \in V\}$ and $l^*$ is the identity of the node with the maximum identity.
 Let ${\tt E}:\Gamma\rightarrow\mathbb{N}$ be the function defined by,
$${\tt E}(\gamma)=\sum_{v\in V} \epsilon(\gamma,v).$$ 
We define 
\[
\Gamma_{\tt IEF}=\{\gamma\in\Gamma : \Psi(\gamma)=\Rho(\gamma)={\tt E}(\gamma)=0\}.
\]

\begin{lemma}
\label{lem:IEF}
$ \Gamma_{\tt CF}\triangleright \Gamma_{\tt IEF}$ in $O(n\log n)$ rounds.
\end{lemma}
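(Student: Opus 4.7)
The plan is to prove the convergence in two successive stages: first, show that the variable $\SB$ throughout the ring converges to ${\tt maxFB}$ (so that $\Rho(\gamma)=0$); second, show that all impostor candidate leaders---those at distance $0$ whose most significant bit-position differs from that of $l^*$ at the minimum phase---are eliminated within the time bound. Throughout, I rely on the fact established by Lemmas~\ref{lem:TEF-close} and~\ref{lem:CF-Close} that $\Gamma_{\tt TEF}$ and $\Gamma_{\tt CF}$ are closed, so the parenthood relation induces a spanning forest during the whole argument and trivial errors do not reappear.

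Stage 1 ($\Rho \to 0$): Starting from $\gamma_0 \in \Gamma_{\tt CF}$, I first argue that the node $v^*$ holding the largest identifier $l^*$ eventually carries the correct value $\SB_{v^*}={\tt maxFB}$. If this is not already the case, then one of the predicates $\E_{\tt Elec}$, $\E_{\tt Nd}$ (via the $\inNd$ check $\SB_v\geq \Bit(1,\id_v)$) or $\E_{\tt Bp}$ fires at $v^*$ in $O(1)$ rounds and triggers a $\Reset$ followed by a $\Start$, after which $v^*$ publishes $\SB_{v^*}={\tt maxFB}$. Once this value is published, I claim that the set $Z=\{v\in V : \SB_v={\tt maxFB}\}$ is non-shrinking and grows by at least one node per round. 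Indeed, any neighbor $w$ of $Z$ with $\SB_w<{\tt maxFB}$ has ${\tt Max\SB}(w)>\SB_w$, so by definition of $\Best$ it selects a neighbor in $Z$ and executes $\Passive(w)$, inheriting ${\tt maxFB}$. Since the scheduler is weakly fair, $Z$ reaches $V$ in $O(n)$ rounds, giving $\Rho(\gamma)=0$.

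Stage 2 (${\tt E} \to 0$): Once $\Rho(\gamma)=0$, all remaining candidate leaders agree on the bit-position estimate, and the phase mechanism can be analyzed uniformly. For any impostor candidate $r$ with $\id_r<l^*$, there is a smallest phase $i^*\leq \lfloor\log n\rfloor+1$ at which $\Bit(i^*,\id_r)<\Bit(i^*,l^*)$. I track, for each tree $T_r$ rooted at a candidate, the progression of its phase: a full phase increment at $r$ requires a top-down $\Update$ wave (propagating $\ME_r$ to every descendant) followed by a bottom-up $\Bc$-wave acknowledging completion, and also awaits the hyper-node distance verification process along the path to the leaves. Each of these takes $O(n)$ rounds in the worst case because $T_r$ has diameter at most $n$ and a hyper-node verification completes in $O(\log n)$ rounds per hyper-node along a path of at most $n/\log n$ hyper-nodes. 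Hence each phase of each surviving tree completes in $O(n)$ rounds. Once two adjacent candidate trees reach phase $i^*$, the boundary node in the impostor's tree finds, via $\Best$, a neighbor with strictly larger $\Bp$ at the same phase, executes $\Passive$, and the elimination cascades up $T_r$ until $r$ itself becomes passive. Since each phase either eliminates at least one impostor whose bit is strictly dominated at that phase, or forces all surviving candidates to agree up through one more bit of $l^*$, after at most $O(\log n)$ phases only candidates sharing every bit of $l^*$ up to the current minimum phase remain, driving ${\tt E}(\gamma)$ to $0$. The total round count is $O(n\log n)$.

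The hard part will be Stage 2, and specifically arguing that progress never stalls: I must show that while one subtree is busy running the hyper-node verification or waiting for a $\Bc$-acknowledgement, it is never the case that a globally consistent configuration persists in which no node is enabled to make the ${\tt E}$-reducing move. This requires carefully combining the $\TPass$, $\TInc$, $\TUp$, and ${\tt Wave_B}$ predicates to exhibit a strictly decreasing potential---for example, the lexicographic pair ($\#$ of impostors, $-$ minimum phase, $-$ progress of current wave)---that strictly decreases at each step of a weakly fair schedule, even in the presence of the interleaved hyper-node traffic. The closure of $\Gamma_{\tt CF}$ (Lemma~\ref{lem:CF-Close}) is what prevents new cycles from reappearing during this phase, which is essential for the $\Bc$-wave argument to terminate.
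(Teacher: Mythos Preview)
Your argument has a genuine gap in Stage~2: it tacitly assumes that the node $v^*$ with identity $l^*$ is (or rapidly becomes) a candidate root, so that its tree can absorb the impostors' trees via \textsf{Passive}. But this need not happen. Consider a configuration in $\Gamma_{\tt CF}$ consisting of a \emph{single} tree rooted at some $\ell\neq v^*$ with $\Bit(1,\id_\ell)={\tt maxFB}$ (so $\ell$ and $v^*$ share the most significant bit-position). Then every node, including $v^*$, already has $\SB_v={\tt maxFB}$, so $\inNd(v^*)$ holds and no error fires at $v^*$ in your Stage~1; $v^*$ stays passive. Now in Stage~2 there are no ``two adjacent candidate trees'' to compare: $\ell$ is the only root, its neighbours are its children with identical $\ME$, hence $\Best(\ell)=\emptyset$ forever and $\ell$ never executes \textsf{Passive}. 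Your elimination mechanism simply does not apply, and the potential you sketch never decreases.

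The paper's proof is built precisely around this worst case. It does not try to make the impostor passive; instead it lets $\ell$ cycle through the phases (each phase costing $O(n)$ rounds for the top-down $\Update$ and bottom-up $\Bc$ waves), and observes that at some phase $i\le\lfloor\log n\rfloor+1$ the bit-position broadcast from $\ell$ is strictly smaller than $\Bit(i,\id_{v^*})$. At that point the election-error predicate $\E_{\tt Elec}$ fires, a \textsf{Reset} is triggered, and after the global restart $v^*$ becomes a candidate. The whole detection takes $O(n\log n)$ rounds. Your proposal never invokes $\E_{\tt Elec}$ in Stage~2, which is exactly the mechanism that handles the single-impostor-tree case; without it the lemma does not go through. (A secondary issue: your Stage~1 claim that $Z$ grows by one node per round ignores the $\Pass_{\tt dB}$ guard in $\TPass$, which delays a node from joining until its future parent has published the right hyper-node bit; this does not break the bound but does need an argument.)
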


\begin{proof}
Let us consider first an initial configuration $\gamma\in \Gamma_{\tt CF}$. We have observed in lemma~\ref{lem:CF} that being in $\Gamma_{\tt CF}$ implies $L(\gamma)>0$. Let us denote by $\ell^*$ the node with maximum identity, and $X$ the set of the candidate nodes. Let us suppose that $\ell^*\not \in X$. Let us denote by $\ell$ the node with the maximum identity in $X$. In the worst case, all the sub-spanning tree merge in a unique spanning tree rooted at $\ell$. Thus, let us suppose that $\gamma$ is a configuration where the network is spanned by an unique tree rooted in $\ell$.
In this case $\dis_\ell=0$ and $\dis_u>0$ for every node $u\neq \ell$. 

Let us assume, for the purpose of contradiction, that $\Rho(\gamma)\neq 0$. If the tree is rooted at $\ell$, then every node must have the same $\SB$ as $\ell$. Since $\ell$ is a root, $\SB_\ell=\Bit(1,\id_\ell)$. Hence, $\SB_\ell\neq  {\tt maxFB}$ (because $\Rho(\gamma)\neq0$). Now, $\SB_\ell$ cannot be larger than ${\tt maxFB}$, so there exists $v$ such that $\SB_v={\tt maxFB}$, and $\inNd(v)$ is true. This contradicts $\gamma\in \Gamma_{\tt TEF}$, so we can conclude that $\Rho(\gamma)=0$.

 Since $\ell$ and $\ell^*$ have the same number of bits, there must exist one phase where the bit-position of $\ell^*$ is larger than the bit-position of $\ell$. More formally, there exists $i$, $1<i\leq \lfloor \log n \rfloor +1$, such that, for every $j<i$, we have $\Bit(j,\id_\ell)=\Bit(j,\id_{\ell^*})$, and, for every $k\geq i$, we have $\Bit(k,\id_\ell)<\Bit(k,\id_{\ell^*})$. Note that $i>1$, because, in $\Gamma_{\tt TEF}$, predicate $\inNd$ must be true. The worst case with respect to time complexity is for $i=\lfloor \log n \rfloor +1$, and the arbitrary initial configuration starts at phase $2$. In this case, only $\R_{\tt Root}$ can be activated for $\ell$, and only the rule $\R_{\tt Node}$ for the other nodes (in parallel to the hyper-node distance verification). Node $\ell$ executes the command $\Inc(\ell)$ for $\lfloor \log n \rfloor +1 -2$ times. After each execution of command $\Inc$, every node executing this command updates the variable $\ME$ in a top-down manner (see Predicate $\TUp$ and command  $\Update$). This updating process takes at most $n$ rounds. When all nodes have the same election values, a bottom-up control process is initiated (see Predicate $\TUp$ and command $\Update$). This process takes at most $n$ rounds. After that, $\ell$ increases its phase, and the same process is repeated. At the last phase, $\E_{\tt Elec}(\ell^*)$ is true, and an error is detected. Therefore, if the system contains a impostor leader, then an error is detected in $O(n \log n)$ rounds.
\end{proof}

\begin{lemma}
\label{lem:IEF-Close}
$\Gamma_{\tt IEF}$ is closed.
\end{lemma}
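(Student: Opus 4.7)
The plan is to leverage the already-proved closure of $\Psi=0$ from Lemma~\ref{lem:TEF-close} and then establish, separately, the invariance of $\Rho$ and ${\tt E}$ under the execution of $\Algo$. Thus, starting from $\gamma\in\Gamma_{\tt IEF}$, I will argue step-by-step that no enabled command of $\Algo$ can take the configuration out of $\{\Psi=\Rho={\tt E}=0\}$.

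For the invariance of $\Rho$, I would first observe that $\Rho(\gamma)=0$ forces $\SB_v={\tt maxFB}$ at every node, and in particular no node is in the reset state (so $\dis_v\geq 0$ everywhere). Consequently, the rule $\R_{\tt Start}$ is disabled throughout the ring, and the predicate $\TReset$ is vacuously false since $\Neig{\tt Reset}(v)=\emptyset$ at every $v$. The only remaining commands that write to $\SB_v$ are then $\Reset$ (triggered by $\E_{\tt Hyper}$ or $\E_{\tt Elec}$, since $\E_{\tt T}$ is excluded by Lemma~\ref{lem:TEF-close}) and $\Passive$. Because $\Passive$ assigns $\SB_v:=\SB_{\p_v}$ and every parent also carries ${\tt maxFB}$, such an execution preserves $\Rho=0$. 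The delicate point is to exclude the two remaining reset-triggering predicates; I will argue from the hyper-node correctness analysis developed in the proof of Lemma~\ref{lem:CF}, together with the fact that $\Bp_v$ is modified only by $\Inc$ and $\Start$ (both of which set it from $\Bit(\cdot,\id_v)$), that neither $\E_{\tt Hyper}(v)$ nor $\E_{\tt Elec}(v)$ can become true from within $\Gamma_{\tt IEF}$.

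For the invariance of ${\tt E}$, the key observation is that the set of candidate leaders can only shrink, since $\R_{\tt Start}$ is disabled (by the $\Rho$ analysis above) and $\Passive$ is the only command able to change $\dis_v$ from $0$ to a positive value. Thus if ${\tt minPh}$ is unchanged from one step to the next, $\epsilon(\gamma,v)$ cannot move from $0$ to $1$ for any $v$. When ${\tt minPh}$ strictly increases because the last candidate(s) at the current minimum phase either become passive or execute $\Inc$, I will argue using the phase-synchronization machinery (the control bit $\Bc$, the predicate ${\tt Wave_B}$, and the acknowledgement propagation of $\Up_{\tt Back}$) that $\Inc$ can fire at a root $r$ only once the whole subtree and its boundary nodes are at the current phase with matching election variables. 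Consequently, when the global minimum phase advances, all surviving candidates have already consistently revealed bit-positions at every previous phase at least as large as the corresponding bit-positions of the true maximum-identity node $l^*$; otherwise $l^*$ itself would trigger $\E_{\tt Elec}$, contradicting that $l^*$ is a candidate with $\epsilon(\gamma,l^*)=0$.

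The hardest part will be disentangling the circular interdependence of the three invariants: $\Rho=0$ is needed to rule out a reset cascade that would destroy ${\tt E}=0$ (by resurrecting fresh candidates through $\Start$), while ${\tt E}=0$ is needed to prevent $\E_{\tt Elec}$ from firing at an impostor and triggering a reset that would destroy $\Rho=0$. I plan to break this cycle by a simultaneous step-by-step invariance argument, using the fact that each invariant can only be broken through a command set that is excluded by the other two, and by exploiting the implicit hyper-node consistency carried over from $\Gamma_{\tt CF}$ (which attracts $\Gamma_{\tt IEF}$ by Lemma~\ref{lem:IEF}) to ensure that $\R_{\tt Error}$ is never enabled from within $\Gamma_{\tt IEF}$.
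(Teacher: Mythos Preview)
Your decomposition into separate invariance arguments for $\Psi$, $\Rho$, and ${\tt E}$ is a sound and more systematic route than the paper's, which instead gives an operational description of how subtrees merge phase by phase and argues that no impostor can survive the process. Both approaches are viable, but yours hinges on a claim that does not follow from the definition of $\Gamma_{\tt IEF}$ alone.

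The gap is in your treatment of resets. To keep $\Rho=0$ you must forbid $\R_{\tt Error}$ from firing, and for that you need both $\E_{\tt Hyper}(v)$ and $\E_{\tt Elec}(v)$ to be false at every node. Neither is guaranteed by $\Psi=\Rho={\tt E}=0$. The predicate $\E_{\tt Hyper}$ depends on the hyper-node distance bits being correct, i.e.\ on $\Lambda=0$, which is a component of $\Gamma_{\tt CF}$, not of $\Gamma_{\tt IEF}$. Likewise, $\E_{\tt Elec}(v)$ compares $\Bp_v$ with $\Bit(\Phase_v,\id_v)$ for a root $v$, and nothing in $\Gamma_{\tt IEF}$ ties a root's stored $\Bp_v$ to the actual bit of its identifier (your remark that $\Bp_v$ is ``modified only by $\Inc$ and $\Start$'' concerns subsequent steps, not the initial configuration, and in any case $\Passive$ and $\Update$ also overwrite $\ME_v$). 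Your final paragraph tries to import these missing guarantees from $\Gamma_{\tt CF}$ via ``attracts $\Gamma_{\tt IEF}$'', but attraction $\Gamma_{\tt CF}\triangleright\Gamma_{\tt IEF}$ is not containment $\Gamma_{\tt IEF}\subseteq\Gamma_{\tt CF}$; it gives you nothing about an arbitrary $\gamma\in\Gamma_{\tt IEF}$.

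In fairness, the paper's own proof tacitly makes the same move: it opens with ``Let $\gamma\in\Gamma_{\tt IEF}$, with $L(\gamma)>0$'' and then reasons about candidate subtrees as if the cycle-free and hyper-node invariants of $\Gamma_{\tt CF}$ were already in force. What is really being established, in both cases, is the closure of $\Gamma_{\tt CF}\cap\Gamma_{\tt IEF}$, which is all the attractor chain needs since $\Gamma_{\tt CF}$ is closed by Lemma~\ref{lem:CF-Close}. If you make that restriction explicit at the outset, your invariant-by-invariant argument goes through cleanly and is arguably tighter than the paper's narrative proof; without it, the ``no reset from within $\Gamma_{\tt IEF}$'' step is unjustified.
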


\begin{proof}
Let $\gamma \in \Gamma_{\tt IEF}$, with $L(\gamma)>0$. Let $X$ be the set of candidate leaders (i.e., for every $x\in X$,  $\Root(x)=true$). Let $x\in X$, and let $T_{(x,i)}$ be the subtree rooted at $x$ during phase $i$. For a node $v\in T_{(x,i)}$, we have that (1) either  $\Bs_v<\Bs_{x}$ or $\Bs_v=\Bs_{x}$,  and (2) $\Bit(j,\id_v)<\Bit(j,\id_x)$ for every $j\leq i$. Moreover, for any two candidates leaders $x_1$ and $x_2$, we have that, at phase $i$, $\Bs_{x_1}=\Bs_{x_2}$, and $\Bit(j,\id_{x_1})=\Bit(j,\id_{x_2})$ for evry $j\leq i$. Let $i$ be such that $T_{(x_1,i)}$ and $T_{(x_2,i)}$ have adjacent nodes. Let $x'_1\in T_{(x_1,i)}$ and $x'_2\in T_{(x_2,i)}$ be two nodes such that $x'_1$ is adjacent to $x'_2$. If, at phase $i+1$, $\Bit(i+1,\id_{x_1})>\Bit(i+1,\id_{x_2})$, then the nodes of $T_{(x_1,i)}$ and $T_{(x_2,i)}$ activated by Predicate $\TUp$ executes $\Update$. When node $x'_1$ reaches $\ME_{x_1}$ at phase $i+1$, node $x'_2$ becomes passive (cf. command $\Passive$) and selects node $x'_1$ as its parent. In a bottom-up fashion, every node of $T_{(x_2,i)}$ joins the subtree $T_{(x_1,i+1)}$ (cf. command $\Passive$), and, eventually, $x_2$ becomes passive and joins $T_{(x_1,i+1)}$. By this process, for every node $v$ in  $T_{(x_1,i+1)}$, we have (1) either $\Bs_v<\Bs_{x_1}$ or $\Bs_v=\Bs_{x_1}$,  and (2) $\Bit(j,\id_v)<\Bit(j,\id_{x_1})$ for every $j\leq i+1$. This process is repeated until  phase $\lfloor \log n\rfloor$, where there remains a single leader in the network.
\end{proof}

Note that if a node executes $\Reset$, and since the scheduler is weakly fair, at most $n$ rounds later  all nodes have executed $\Reset$. 

\begin{lemma}
\label{lem:L}
$\Gamma_{\tt IEF}\triangleright \Gamma_{\tt LE}$ in $O(n\log^2 n)$ rounds.
\end{lemma}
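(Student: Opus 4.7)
The plan is to track the number of candidate leaders phase by phase, and to bound separately the number of phases and the cost of a single phase. By Lemma~\ref{lem:IEF-Close}, $\Gamma_{\tt IEF}$ is closed, so throughout the execution every root $r$ satisfies $\SB_r=\Bs_r={\tt maxFB}$, and the true maximum-identity node $\ell^*$ (whose most significant bit is ${\tt maxFB}$) remains a candidate forever by the definition of ${\tt E}(\gamma)$ being $0$. Hence I only need to prove that all other candidates become passive, in the claimed time.

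First I would bound the cost of a single phase $i$ at a root $r$. By rule $\R_{\tt Root}$, $r$ executes $\StartdB(r)$ and then $\Inc(r)$, which records the bit $\Bit(i,\id_r)$ inside $\ME_r$. This new value of $\ME_r$ is then propagated down the subtree $T_r$ by successive applications of $\Update$ under $\Up_{\tt p}$ (predicate $\TUp$): since $T_r$ has depth at most $n$ and the scheduler is weakly fair, this top-down wave reaches every node of $T_r$ in $O(n)$ rounds. Once the leaves hold $\ME=\ME_r$, the sub-predicate $\Up_{\tt Back}$ triggers a bottom-up wave setting $\Bc=1$, which reaches $r$ in another $O(n)$ rounds; the guard ${\tt Wave_B}$ inside $\TInc$ then lets $r$ fire $\Inc$ again for phase $i+1$. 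Interleaved with this, each leaf $v\in T_r$ whose neighbor $u$ in an adjacent subtree $T_{r'}$ has a strictly larger $\Bp$ at the same phase executes $\Passive(v)$ and becomes a child of $u$, starting the merge of part of $T_{r'}$ into $T_r$; each such join must wait for the involved hyper-node to complete one $\BinaryAd$/$\Broadcast$/$\Verification$ cycle, adding $O(\log n)$ rounds per affected hyper-node, for a total per-phase cost of $O(n\log n)$.

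Next I would bound the number of phases. After a phase $i$ has been acknowledged at every pair of adjacent surviving roots $(r,r')$, the one with the smaller $\Bit(i,\id\cdot)$ becomes passive through $\Passive$, and its subtree is absorbed into the other; by the hypothesis ${\tt E}(\gamma)=0$, $\ell^*$ is never on the losing side. Since identifiers are encoded on at most $\lfloor\log n\rfloor+1$ bits and at each phase at least one surviving candidate meeting a larger one disappears, after $\lfloor\log n\rfloor+1$ phases only $\ell^*$ remains a root. At that point $L(\gamma)=1$, i.e.\ $\gamma\in\Gamma_{\tt LE}$. Multiplying by the per-phase bound gives $O(n\log^2 n)$ rounds.

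The hard part will be verifying that the three concurrent processes (the top-down $\Update$ wave, the bottom-up $\Bc$ acknowledgement wave, and the hyper-node maintenance triggered by each $\Passive$) interleave without breaking closure of $\Gamma_{\tt IEF}$ nor of $\Gamma_{\tt CF}$ (Lemma~\ref{lem:CF-Close}), and that the ${\tt Wave_B}$ guard in $\TInc$ and $\TUp$ indeed prevents $r$ from advancing to phase $i+1$ before every node of $T_r$ has completed phase $i$. In particular, one must show that a $\Passive$ step executed during a wave does not reset the progress measure, so that the potential ``number of surviving candidate roots'' strictly decreases once every $O(n\log n)$ rounds; this is the key invariant that turns the two bounds above into the claimed $O(n\log^2 n)$ convergence time.
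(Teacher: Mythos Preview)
Your proposal is correct and follows essentially the same approach as the paper: decompose the execution into at most $\lfloor\log n\rfloor+1$ phases, bound the cost of a single phase by $O(n\log n)$ rounds, and use the number of surviving candidate roots (the paper's $L(\gamma)$) as a strictly decreasing potential. You are in fact somewhat more explicit than the paper about the per-phase mechanics (the top-down $\Update$ wave, the bottom-up $\Bc$ acknowledgement, and the $O(\log n)$ hyper-node maintenance triggered by each $\Passive$), and your closing paragraph correctly identifies the interleaving of these three processes and the role of the ${\tt Wave_B}$ guard as the points requiring care; the paper handles these by referring back to the proofs of Lemmas~\ref{lem:CF-Close} and~\ref{lem:IEF}. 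One small remark: your claim that $\ell^*$ remains a candidate does not follow directly from ${\tt E}(\gamma)=0$ alone --- you also need $L(\gamma)>0$, which comes from $\Gamma_{\tt CF}$ (closed by Lemma~\ref{lem:CF-Close}); the paper makes this extra assumption explicit.
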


\begin{proof}
We know by Lemma~\ref{lem:CF-Close} that $\Gamma_{\tt CF}$ is closed, and we know by Lemma~\ref{lem:IEF-Close} that  $\Gamma_{\tt IEF}$ is closed. Moreover the proof of Lemma~\ref{lem:IEF-Close} provided details about the election process at each phase. Let $\gamma \in \Gamma_{\tt IEF}$ at round $t$. Moreover, let us suppose that $\gamma\in\Gamma_{\tt CF}$. That is, $L(\gamma)>0$. More precisely, 
let $i$, $1\leq i\leq \lfloor \log n \rfloor +1$, denote the smallest phase counter in the network, among all nodes. At phase $i$, there are at most $n/2^{i-1}$ candidate leaders (i.e., at most this many roots). Thus, $L(\gamma)=n/2^{i-1}$ at phase $i$. We have studied in the proof of Lemma~\ref{lem:IEF} how Algorithm $\Algo$ performs the election process. After $O(n \log n)$ rounds, Phase $i+1$ is completed, and the system reaches some configuration $\gamma'$. At this point, there remain at most $n/2^{i}$ candidate leaders. Since $L(\gamma')\leq n/2^{i}$, we get that 
\[
L(\gamma')<L(\gamma).
\]
The number of phases is upper bounded by $\lfloor \log n\rfloor+1$. At phase $\lfloor \log n\rfloor+1$, we reach a configuration $\gamma''$ satisfying $L(\gamma'')=1$.
A direct consequence of Lemma~\ref{lem:IEF} and Lemma~\ref{lem:IEF-Close} is that only the node $\ell^*$ with  maximum identity has $\leader_{\ell^*}=1$. Every other node $v$ has $\leader_{v}=0$. Moreover, for every node $v\neq \ell^*$, we have $\p_v\neq \emptyset$, and the structure induced by the pointers  $\p_v$, for all $v\neq\ell^*$ forms a spanning tree rooted in $\ell^*$.
Regarding time complexity, our algorithm takes $O(n \log n)$ rounds to detect an impostor leader, $O(n)$ rounds to reset the system after the detection of an error, and $O(n \log^2 n)$ rounds to elect the leader. Therefore, in total,  Algorithm $\Algo$ performs $O(n \log^2 n)$ rounds to converge to the leader specification. 
\end{proof}

\begin{lemma}
\label{lem:L-Close}
$\Gamma_{\tt LE}$ is closed.
\end{lemma}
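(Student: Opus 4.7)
The plan is to show that from any configuration $\gamma \in \Gamma_{\tt LE}$, no step of $\Algo$ can change the value of $L(\gamma) = \sum_{v \in V} \leader_v$ away from $1$. Only three commands ever modify $\leader_v$: $\Reset$ and $\Passive$ set it to $0$, and $\Start$ sets it to $1$. It therefore suffices to rule out, at the unique leader $\ell^*$, the execution of $\Reset(\ell^*)$ and $\Passive(\ell^*)$, and, at every non-leader $v$, the execution of $\Start(v)$.

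First I would pin down the shape of $\gamma$ by invoking the earlier closure results. Because Lemma~\ref{lem:L} reached $\Gamma_{\tt LE}$ only after attaining $\Gamma_{\tt TEF}$, $\Gamma_{\tt CF}$ and $\Gamma_{\tt IEF}$, all of which are closed by Lemmas~\ref{lem:TEF-close}, \ref{lem:CF-Close} and \ref{lem:IEF-Close}, I may assume in $\gamma$ that the parenthood relation induces a spanning tree rooted at the unique leader $\ell^*$, that $\ell^*$ is the node with globally maximum identity, that all hyper-node distances are consistent, and that the election variables $\ME$ are coherent along the tree.

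Next I would rule out each bad command in turn. For $\Reset(\ell^*)$, I would check each disjunct of $\Error(\ell^*) \vee \TReset(\ell^*)$: closure of $\Gamma_{\tt TEF}$ kills $\E_{\tt T}(\ell^*)$; closure of $\Gamma_{\tt CF}$ together with the correctness of the hyper-node distance verification established in the proof of Lemma~\ref{lem:CF} kills $\E_{\tt  Hyper}(\ell^*)$; and closure of $\Gamma_{\tt IEF}$, combined with the fact that the bit of $\ell^*$ revealed at each phase is globally maximum, kills $\E_{\tt Elec}(\ell^*)$. Since no neighbor of $\ell^*$ can then be in a reset state, $\TReset(\ell^*)$ is false as well. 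For $\Passive(\ell^*)$, I would observe that $\TPass(\ell^*)$ requires $\Best(\ell^*) \neq \emptyset$, i.e. a neighbor with strictly larger $\SB$ or with equal $\SB$ but strictly larger bit-position at the current phase; both are impossible because $\ell^*$ has the maximum identity. Finally, $\Start(v)$ at $v \neq \ell^*$ requires $\dis_v = -1$, whereas in the spanning tree rooted at $\ell^*$ every non-root satisfies $\dis_v \in \{1,\dots,\SB_v\}$.

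The main obstacle will be justifying that the continual non-silent activity of $\Algo$ in $\gamma$, namely phase increments triggered by $\R_{\tt Root}$, bit-position broadcasting via $\StartdB$, tree-wide updates via $\Update$, and the perpetual hyper-node binary addition and verification via $\R_{\tt HyperNd}$, never inadvertently puts $\ell^*$ (or any of its descendants propagating information back up) in a state where $\E_{\tt  Hyper}$ or $\E_{\tt Elec}$ becomes true. The argument has to piggy-back on the closure proofs of $\Gamma_{\tt CF}$ and $\Gamma_{\tt IEF}$, emphasising that those invariants are preserved \emph{in conjunction with} $\Root(\ell^*)$, so that the auxiliary subroutines running forever on top of the legitimate tree never create local inconsistencies that would trigger a reset and destroy the uniqueness of the leader.
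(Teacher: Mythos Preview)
Your proposal is correct and follows essentially the same line as the paper: both arguments use the closure of $\Gamma_{\tt TEF}$, $\Gamma_{\tt CF}$ and $\Gamma_{\tt IEF}$ together with the maximality of $\id_{\ell^*}$ to conclude that no command changing $\leader$ (or the tree structure) is ever enabled. The paper's proof is terser, essentially arguing that since $\ell^*$ has the maximum identity, $\Best(v)$ is empty for every $v$ and hence non-root nodes can only execute $\Update$; your version is more explicit in enumerating the three commands $\Reset$, $\Passive$, $\Start$ that touch $\leader$ and disposing of each separately, which is arguably cleaner since the paper's opening claim that ``$\R_{\tt Passive}$ is the only rule that modifies the leader variable'' is literally false without the implicit appeal to earlier closure lemmas that rules out $\Reset$ and $\Start$.
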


\begin{proof}
The rule $\R_{\tt Passive}(v)$ is the only rule performed by $v$ that modifies the distance and the leader variables of node $v$. Let $\ell^*$ be the node with the maximum identity. As a direct consequence of Lemma~\ref{lem:L}, in the initial configuration, $\ell^*$ is the only node that has  $\dis_{\ell^*}=0$ and $\leader_{\ell^*}=1$. In other words $\ell^*$ is the only elected node. Therefore, $\ell^*$changes the phase of the system by increasing the current phase, or by restarting from phase~1 (see predicate $\TInc$ and command $\Inc$). Thus, every node $v$ satisfies $\Bs_v\leq\Bs_{\ell^*}$. Moreover, for every phase $i$, $1\leq i\leq \lfloor \log n\rfloor+1$, every node $v$ satisfies $\Bit(i,\id_v)<\Bp_{\ell^*}$. Hence, every node can only executes the command $\Update$. Finally, nodes never change their distance, parent,  and leader variables.
\end{proof}

\subsection{Memory requirements}

\begin{lemma}
\label{lem:Memorie}
Algorithm \Algo\/ use $O(\log \log n)$ bits of memory per node.
\end{lemma}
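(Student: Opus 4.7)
My plan is to go through the list of variables stored at each node one by one, bound the size of each, and then sum. Since the identifiers are assumed to lie in $[1, n^c]$ for some constant $c \geq 1$, every identifier is encoded on $O(\log n)$ bits, so any \emph{bit-position} of an identifier (i.e., any value returned by $\Bit(\cdot, v)$) is an integer in $\{0, 1, \ldots, O(\log n)\}$, which can be stored on $O(\log \log n)$ bits. This is the key observation that drives the whole count.

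Concretely, I would enumerate the variables named in Section~\ref{subsec:detaildescription}: the boolean $\leader_v \in \{0,1\}$ (1 bit); the parent pointer $\p_v$, which in a ring is a port number in $\{0,1\}$ plus an ``empty'' symbol (constant); the distance $\dis_v \in \{-1, 0, 1, \ldots, \lfloor \log n \rfloor\}$ (an integer bounded by $\log n$, hence $O(\log \log n)$ bits); the hyper-node bit $\dB_v \in \{0,1\}$ (1 bit); the addition flag $\Add_v \in \{+, \text{ok}, \bot\}$ (constant); and the two pipeline/verification variables $\MC_v$ and $\MV_v$, each of which is either $\bot$ or a pair in $\{1, \ldots, \lfloor \log n \rfloor\} \times \{0,1\}$, and therefore fits on $O(\log \log n)$ bits. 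Finally, the estimate $\SB_v$ of $\lfloor \log n \rfloor$ is itself a bit-position, and the four components of $\ME_v = (\Bs_v, \Phase_v, \Bp_v, \Bc_v)$ are respectively a bit-position, a phase counter bounded by $\lfloor \log n \rfloor + 1$, another bit-position, and a single control bit, all of which again fit on $O(\log \log n)$ bits each.

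Summing over all these variables gives a constant number of fields of size $O(\log \log n)$ plus a constant number of $O(1)$-bit fields, for a total of $O(\log \log n)$ bits per node. I would also explicitly remark that the identifier $\id_v$ itself is \emph{not} counted as local memory in our model: a node only accesses $\id_v$ through the oracle $\Bit(x, v)$, whose output is a single bit-position of $O(\log \log n)$ bits, so nothing in the algorithm ever stores or communicates a full identifier.

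There is essentially no obstacle here beyond careful bookkeeping; the only subtlety worth emphasising is that nodes do not know $n$ and use $\SB_v$ as a (possibly incorrect) estimate of $\lfloor \log n \rfloor$. However, the range of $\SB_v$ is still bounded by the most significant bit-position of an identifier in the network, which is $O(\log n)$ by the identifier assumption, so its representation still costs only $O(\log \log n)$ bits. The argument is therefore a straightforward enumeration, and the constants hidden in the $O(\cdot)$ notation could even be made explicit if desired.
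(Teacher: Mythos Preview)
Your proposal is correct and follows essentially the same approach as the paper: enumerate the variables, split them into constant-size ones ($\p_v$, $\dB_v$, $\Add_v$, $\leader_v$) and $O(\log\log n)$-size ones ($\dis_v$, $\SB_v$, $\MC_v$, $\MV_v$, $\ME_v$), and sum. Your additional remarks about the identifier being accessed only through $\Bit(\cdot,v)$ and about $\SB_v$ being bounded by the maximal bit-position are sound and make the argument slightly more explicit than the paper's own proof, but the substance is the same.
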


\begin{proof}
Algorithm \Algo\/ has two types of variables: the variables that use a constant number of bits, and those that use $O(\log \log n)$ bits. Variables of the first type are: 
\[
\p_v\in\{\emptyset,0,1\}, \;\; \dB_v\in \{0,1\}, \;\; \Add_v\in\{+,ok,\emptyset\}, \; \text{and} \; \leader_v\in \{0,1\}. 
\]
Variables of the second type are: 
\[
\SB_v\in\{1,...,\lfloor \log n \rfloor\}, \;\; 
\MC_v\in \{1,...,\lfloor \log n \rfloor\}\times \{0,1\}, \;\;
\MV_v\in \{1,...,\lfloor \log n \rfloor\}\times\{0,1\},
\]
and
\[
\ME_v\in \{1,...,\lfloor \log n \rfloor\}\times\{1,...,\lfloor \log n \rfloor\}\times\{1,...,\lfloor \log n \rfloor\}\times\{0,1\}.
\]
Hence, \Algo\/ uses $O(\log \log n)$ bits of memory per node.
\end{proof}

\section{Conclusion}

In this paper, we have shown that, in the state model, with a weakly fair distributed scheduler, one can elect a leader in a ring with a (non-silent) self-stabilizing algorithm using only $O(\log\log n)$ bits of memory per node. It is known that one cannot do the same using only $O(1)$ bits of memory per node (see~\cite{BGJ07j}). An intriguing question is whether one can perform leader election in the same framework as in this paper, using just $o(\log\log n)$ bits per node, and, if yes, to what extend can the memory requirement for (non silent) leader election being reduced. A natural candidate function for the minimum memory requirement for leader election is $O(\log^*n)$ bits per node, by applying the techniques in this paper recursively. This however seems to be non trivial, as self-stabilization has to be maintained at every level of the recursion. 

\bibliographystyle{plain}	
\bibliography{BibDISC}

\begin{thebibliography}{10}

\bibitem{ANT12c}
J.~Adamek, M.~Nesterenko, and S.~Tixeuil.
\newblock Using abstract simulation for performance evaluation of stabilizing
  algorithms: The case of propagation of information with feedback.
\newblock In {\em SSS 2012}, LNCS. Springer, 2012.

\bibitem{AB98j}
Y.~Afek and A.~Bremler-Barr.
\newblock Self-stabilizing unidirectional network algorithms by power supply.
\newblock {\em Chicago J. Theor. Comput. Sci.}, 1998.

\bibitem{AG94j}
A.~Arora and M.~G. Gouda.
\newblock Distributed reset.
\newblock {\em IEEE Trans. Computers}, 43(9):1026--1038, 1994.

\bibitem{AK09c}
M.~Arumugam and S.~S. Kulkarni.
\newblock Prose: A programming tool for rapid prototyping of sensor networks.
\newblock In {\em S-CUBE}, pages 158--173, 2009.

\bibitem{AKMPV07j}
B.~Awerbuch, S.~Kutten, Y.~Mansour, B.~Patt-Shamir, and G.~Varghese.
\newblock A time-optimal self-stabilizing synchronizer using a phase clock.
\newblock {\em IEEE Trans. Dependable Sec. Comput.}, 4(3):180--190, 2007.

\bibitem{AO94c}
B.~Awerbuch and R.~Ostrovsky.
\newblock Memory-efficient and self-stabilizing network reset.
\newblock In {\em PODC}, pages 254--263. ACM, 1994.

\bibitem{BDDT07j}
J.~Beauquier, S.~Dela\"{e}t, S.~Dolev, and S.~Tixeuil.
\newblock Transient fault detectors.
\newblock {\em Distributed Computing}, 20(1):39--51, June 2007.

\bibitem{BGJ07j}
J.~Beauquier, M.~Gradinariu, and C.~Johnen.
\newblock Randomized self-stabilizing and space optimal leader election under
  arbitrary scheduler on rings.
\newblock {\em Distributed Computing}, 20(1):75--93, January 2007.

\bibitem{BlinT_PODC13}
L.~Blin and S.~Tixeuil.
\newblock Brief announcement: deterministic self-stabilizing leader election
  with o(log log n)-bits.
\newblock In {\em Proceedings of the 32st ACM Symposium on Principles of
  Distributed Computing, (PODC13)}, pages 125--127, 2013.

\bibitem{BlinT_DISC13}
L.~Blin and S.~Tixeuil.
\newblock Compact deterministic self-stabilizing leader election: The
  exponential advantage of being talkative.
\newblock In {\em Proceedings of the 27th International Conference on
  Distributed Computing (DISC 2013)}, Lecture Notes in Computer Science (LNCS),
  pages 76--90. {S}pringer {B}erlin / {H}eidelberg, 2013.

\bibitem{CG12j}
Y.~Choi and M.~G. Gouda.
\newblock A state-based model of sensor protocols.
\newblock {\em Theor. Comput. Sci.}, 458:61--75, 2012.

\bibitem{DMCDHSHLAG08c}
A.~R. Dalton, W.~P. McCartney, K.~Ghosh Dastidar, J.~O. Hallstrom, N.~Sridhar,
  T.~Herman, W.~Leal, A.~Arora, and M.~G. Gouda.
\newblock Desal alpha: An implementation of the dynamic embedded
  sensor-actuator language.
\newblock In {\em ICCCN}, pages 541--547. IEEE, 2008.

\bibitem{DLV11j}
A.~Kumar Datta, L.~L. Larmore, and P.~Vemula.
\newblock Self-stabilizing leader election in optimal space under an arbitrary
  scheduler.
\newblock {\em TCS}, 412(40):5541--5561, 2011.

\bibitem{DMT09c}
S.~Devismes, T.~Masuzawa, and S.~Tixeuil.
\newblock Communication efficiency in self-stabilizing silent protocols.
\newblock In {\em ICDCS 2009}, pages 474--481. {IEEE} Press, 2009.

\bibitem{D74j}
E.~W. Dijkstra.
\newblock Self-stabilizing systems in spite of distributed control.
\newblock {\em Commun. ACM}, 17(11):643--644, 1974.

\bibitem{D00b}
S.~Dolev.
\newblock {\em Self-stabilization}.
\newblock MIT Press, March 2000.

\bibitem{DGS99j}
S.~Dolev, M.~G. Gouda, and M.~Schneider.
\newblock Memory requirements for silent stabilization.
\newblock {\em Acta Inf.}, 36(6):447--462, 1999.

\bibitem{DH97j}
S.~Dolev and T.~Herman.
\newblock Superstabilizing protocols for dynamic distributed systems.
\newblock {\em Chicago J. Theor. Comput. Sci.}, 1997, 1997.

\bibitem{DIM97j}
S.~Dolev, A.~Israeli, and S.~Moran.
\newblock Resource bounds for self-stabilizing message-driven protocols.
\newblock {\em SIAM J. Comput.}, 26(1):273--290, 1997.

\bibitem{DT11r}
S.~Dubois and S.~Tixeuil.
\newblock A taxonomy of daemons in self-stabilization.
\newblock Technical Report 1110.0334, ArXiv eprint, October 2011.

\bibitem{FJ01c}
F.~E. Fich and C.~Johnen.
\newblock A space optimal, deterministic, self-stabilizing, leader election
  algorithm for unidirectional rings.
\newblock In {\em DISC}, pages 224--239. Springer, 2001.

\bibitem{GCH06c}
M.~G. Gouda, J.~Arturo Cobb, and C.~Huang.
\newblock Fault masking in tri-redundant systems.
\newblock In {\em SSS}, LNCS, pages 304--313. Springer, 2006.

\bibitem{HP00j}
T.~Herman and S.~V. Pemmaraju.
\newblock Error-detecting codes and fault-containing self-stabilization.
\newblock {\em Inf. Process. Lett.}, 73(1-2):41--46, 2000.

\bibitem{H98j}
J.~Hoepman.
\newblock Self-stabilizing ring-orientation using constant space.
\newblock {\em Inf. Comput.}, 144(1):18--39, 1998.

\bibitem{IJ93j}
A.~Israeli and M.~Jalfon.
\newblock Uniform self-stabilizing ring orientation.
\newblock {\em Inf. Comput.}, 104(2):175--196, 1993.

\bibitem{IL94c}
G.~Itkis and L.~A. Levin.
\newblock Fast and lean self-stabilizing asynchronous protocols.
\newblock In {\em FOCS}, pages 226--239. IEEE Computer Society, 1994.

\bibitem{ILS95c}
G.~Itkis, C.~Lin, and J.~Simon.
\newblock Deterministic, constant space, self-stabilizing leader election on
  uniform rings.
\newblock In {\em WDAG}, LNCS, pages 288--302. Springer, 1995.

\bibitem{KormanKM11}
Amos Korman, Shay Kutten, and Toshimitsu Masuzawa.
\newblock Fast and compact self stabilizing verification, computation, and
  fault detection of an mst.
\newblock In {\em Proceedings of the 30th Annual ACM Symposium on Principles of
  Distributed Computing, PODC 2011}, pages 311--320, 2011.

\bibitem{MT09j}
T.~Masuzawa and S.~Tixeuil.
\newblock On bootstrapping topology knowledge in anonymous networks.
\newblock {\em ACM Transactions on Adaptive and Autonomous Systems}, 4(1),
  2009.

\bibitem{MOOY92c}
A.~J. Mayer, Y.~Ofek, R.l Ostrovsky, and M.~Yung.
\newblock Self-stabilizing symmetry breaking in constant-space (extended
  abstract).
\newblock In {\em STOC}, pages 667--678, 1992.

\bibitem{MG05b}
T.~M. McGuire and M.~G. Gouda.
\newblock {\em The Austin Protocol Compiler}, volume~13 of {\em Advances in
  Information Security}.
\newblock Springer, 2005.

\bibitem{T09bc}
S.~Tixeuil.
\newblock {\em Algorithms and Theory of Computation Handbook}, pages
  26.1--26.45.
\newblock CRC Press, Taylor \& Francis Group, 2009.

\end{thebibliography}


\end{document}